\newcommand{\etal}{et al.\xspace}
\newcommand{\R}{\mathbb{R}}
\renewcommand{\S}{\mathcal{S}}
\newcommand{\D}{\ensuremath{\mathcal{D}}\xspace}
\newcommand{\RR}{\ensuremath{\mathcal{R}}\xspace}
\newcommand{\MBM}{\ensuremath{\mathrm{MBM}}\xspace}
\newcommand{\eps}{\varepsilon}
\title{Fully-Adaptive Dynamic Connectivity of Square Intersection Graphs}
\titlerunning{Fully-Adaptive Dynamic Connectivity of Square Intersection Graphs}
\author{Ivor van der Hoog}{Technical University of Denmark, Denmark}{idjva@dtu.dk}{https://orcid.org/0009-0006-2624-0231}{This project has additionally received funding from the European Union's Horizon 2020 research and innovation programme under the Marie Sk\l{}odowska-Curie grant agreement No 899987.}
\author{Andr\'{e} Nusser}{
CNRS, Inria, I3S, Université Côte d'Azur, France}{andre.nusser@cnrs.fr}{}{This work was supported by the French government through the France 2030 investment plan managed by the National Research Agency (ANR), as part of the Initiative of Excellence of Université Côte d’Azur under reference number ANR-15-IDEX-01.}
\author{Eva Rotenberg}{Technical University of Denmark, Denmark}{erot@dtu.dk}{https://orcid.org/0000-0001-5853-7909}{Supported by the Independent Research Fund Denmark grant
2020-2023 (9131-00044B) ``Dynamic Network Analysis'' and the
Carlsberg Foundation Young Researcher Fellowship CF21-0302 ``Graph
Algorithms with Geometric Applications''.}
\author{Frank Staals}{Utrecht University, Netherlands}{f.staals@uu.nl}{}{}
\authorrunning{Ivor van der Hoog, Andr\'{e} Nusser, Eva Rotenberg, Frank Staals}
\keywords{Computational geometry, planar geometry, data structures, geometric intersection graphs, fully-dynamic algorithms}
\begin{document}

\maketitle

\begin{abstract}
A classical problem in computational geometry and graph algorithms is: given a dynamic set $\mathcal{S}$ of geometric shapes in the plane, efficiently maintain the connectivity of the intersection graph of $\mathcal{S}$.
Previous papers studied the setting where, before the updates, the data structure receives some parameter $P$. 
Then, updates could insert and delete disks as long as at all times the disks have a diameter that lies in a fixed range $[\frac{1}{P}, 1]$. 
As a consequence of that prerequisite, the aspect ratio $\psi$ (i.e. the ratio between the largest and smallest diameter) of the disks would at all times 
satisfy $\psi \leq P$. 
The state-of-the-art for storing disks in a dynamic connectivity data structure is a data structure that uses $O(Pn)$ space and that has amortized $O(P \log^4 n)$ expected amortized update time. Connectivity queries between disks are supported in $O(\log n / \log \log n)$ time. 
The state-of-the-art for Euclidean disks immediately implies 
a data structure for connectivity between axis-aligned squares that have their diameter in the fixed range  $[\frac{1}{P}, 1]$, with an improved update time of $O(P \log^4 n)$ amortized time. 

In the dynamic setting, one wishes for a more flexible data structure in which disks of any diameter may arrive and leave, independent of their diameter, changing the aspect ratio freely. 
We study fully-dynamic square intersection graph connectivity. Our result is fully-adaptive to the aspect ratio, spending time proportional to the current aspect ratio $\psi$, as opposed to some previously given maximum $P$. 
Our focus on squares allows us to simplify and streamline the connectivity pipeline from previous work.
When $n$ is the number of squares and $\psi$ is the aspect ratio after insertion (or before deletion), our data structure answers connectivity queries in $O(\log n / \log \log n)$ time.
We can update connectivity information in $O(\psi \log^4 n + \log^6 n)$ amortized time.
We also improve space usage from $O(P \cdot n \log n)$ to $O(n \log^3 n \log \psi)$ --- while generalizing to a fully-adaptive aspect ratio --- which yields a space usage that is  near-linear in $n$ for any polynomially bounded $\psi$.
\end{abstract}

\newpage
\section{Introduction}
\setcounter{page}{1}

Geometric intersection graphs are one of the most well-studied geometrically-flavoured graph classes:
Their nodes are geometric shapes, and an edge between two such shapes exists if and only if they intersect.
This makes the description complexity of a geometric intersection graph very compact; it is linear in the number of objects, while the underlying graph potentially has a quadratic number of edges.
In this work, we consider square intersection graphs in the dynamic setting.
Intersection graphs are one of the few examples of dynamic graphs where fully-dynamic insertion and deletion of vertices is motivated and interesting. Since a vertex can come or leave with $\Theta(n)$ edges, applying any existing edge-updatable dynamic graph algorithm in a blackbox manner would lead to $\Omega(n)$ update time. Yet, the geometric nature of these graphs often allows for sublinear or even polylogarithmic update times. 

A classical problem in dynamic graph algorithms is dynamic connectivity.
In this problem, we want to maintain a data structure under edge or node insertions and deletions that allows for fast queries that return whether a given pair of vertices is connected. Connectivity was one of the first problems to be studied in the dynamic setting dating back to the 80s~\cite{Frederickson85, SleatorT83}, and has received ample attention ever since.
Dynamic connectivity in general graphs has been studied in many settings; randomised or deterministic, amortised or worst-case~\cite{Frederickson85,HenzingerK99,Thorup00connectivity,KapronKM13,NanongkaiSW17}, and the partially dynamic incremental or decremental settings~\cite{Tarjan79,TarjanL84,Thorup99,Aamand}. Due to its fundamental nature, and its many applications, dynamic connectivity has also received much attention for simpler graph classes. Examples of such graph classes include trees~\cite{SleatorT83,AlstrupSS97}, planar graphs~\cite{eppstein1993separator,LackiS15}, and graphs of bounded genus~\cite{Eppstein03,HolmR21}.

Naturally, the dynamic connectivity problem also drew attention for the class of geometric intersection graphs.
This setting is particularly interesting as a single node insertion can drastically change the number of connected components, as it can introduce a linear number of new edges.
On the other hand, the geometric structure can be exploited in the data structures.
The first result for geometric connectivity in geometric intersection graphs with update and query time independent of the object diameters is by Chan \etal~\cite{CPR11} who presents a dynamic (Euclidean) disk intersection data structure with an update time of $O(n^{20/21 + \eps})$ and a query time of $O(n^{1/7 + \eps})$. This has recently been improved to $O(n^{7/8})$ amortized update time with constant query time~\cite{chan2024dynamic}.
As progress seemed difficult in this setting, the setting in which the disks in the data structure have restricted diameters was considered.
For a fixed diameter range, where there is some value $P$ given in advance and diameters have to be contained in an interval $[\frac{1}{P}, 1]$, Kaplan \etal~\cite{kaplan2020dynamic} showed that there is a data structure with expected amortized $O(P^2 \log^{10} n)$ update time, query time $O(\log n/ \log \log n)$, using $O(n P \log n)$ space.
Recently, Kaplan \etal~\cite{kaplan2022dynamic} improved this to $O(P \log^7 n)$ expected amortized update time with the same update time and $O(n P)$ space. 

\subparagraph{From disks to squares.}
While the above works are stated for Euclidean disks, we note that the approach in \cite{kaplan2022dynamic} can be combined with \cite{willard1985adding} to obtain a data structure that works for the simpler setting of connectivity between axis-aligned squares. The obtained update time is amortized $O(P \log^4 n)$.
Disk intersection graphs are often motivated by communication networks where the disks are interpreted as some sort of transmission diameter. This is an idealization of a complicated physical process and actual ad-hoc communication networks do not correspond to perfectly circular disks~\cite{adhocnets}. Thus, it is reasonable to switch to a different metric for computational reasons while maintaining the core idea of the underlying problem.
Recently, similar progress was made for computing a single-source shortest path tree in an intersection graph by assuming square regions instead of disks~\cite{klost2023algorithmic}. 

\newpage
We study the dynamic connectivity problem where the input $S$ is a set of (axis-aligned) squares,  while being fully adaptive to their aspect ratio. 
We let $\psi$ denote the adaptive aspect ratio.
Formally, if $S$ is the input before an update and $S'$ is the input after an update we define $\psi =  \max \{ \frac{\max_{\sigma \in S} |\sigma|}{\min_{\sigma' \in S} |\sigma'| }, \frac{\max_{\sigma \in S'} |\sigma|}{\min_{\sigma' \in S'} |\sigma'|} \}$. 
Our date structure maintains connectivity between squares with $O(\psi \log^4 n + \log^6 n)$ update time, $O(\log n / \log \log n)$ query time, and using $O(n \log^3 n \log \psi)$ space, see Table~\ref{tab:results} for a comparison.
our approach only requires near-linear space while maintaining near-linear update time and polylogarithmic query time.

\subparagraph*{Implications of adaptivity and our reduced space usage. }
To understand the implications of 
the adaptivity of our new solution,
consider 
the scenario where the set of squares starts with a square $A$ with diameter $1$ and 
$B$ with diameter $\frac{1}{n}$. Now, suppose the sequence of updates first removes $B$, then inserts a sequence of $n$ squares of diameter $1$, and finally reinserts $B$. 
Previous work~\cite{kaplan2022dynamic} would require as input the interval $[\frac{1}{n}, 1]$ and the promise that all updates only insert squares in this interval. 
The space usage and the total update time of \cite{kaplan2022dynamic} is quadratic. 
In our case, since for almost all updates, the aspect ratio is constant. Moreover, the space usage and total update time is near-linear.

\begin{table}
    \caption{Complexities are asymptotic. All update times are amortized. The query time for all approaches is $O(\log n / \log \log n)$. $\lambda_s(n)$ denotes the maximum length of a Davenport-Schinzel
sequence of order $s$ on $n$ symbols. }
    \label{tab:results}
    \centering
    \begin{tabular}{ c lllll}
    \toprule
      Object  & Aspect ratio & Update time & Space & Ref. \\
        \midrule
         disks  & fixed $[\frac{1}{P}, 1]$ & $P \cdot \log^7 n \cdot \lambda_6(\log n)$\, exp. & $n P$ & \cite{kaplan2022dynamic} \\
disks & fixed $[\frac{1}{P}, 1]$ & $P \cdot \log^4$\, exp. & $n P$ & \cite{kaplan2022dynamic}+\cite{liu2022nearly} \\        
squares  & fixed $[\frac{1}{P}, 1]$ & $P \cdot \log^4 n$ & $n P$ & \cite{kaplan2022dynamic}+\cite{willard1985adding} \\
         squares & adaptive $\psi$ & $ \psi \log^4 n + \log^6 n $ & $n \log^3 n \log \psi$ & Thm~\ref{thm:L1}  \\
         \bottomrule
    \end{tabular}
\end{table}

\section{Problem statement and technical overview}
Let $\S \subset \R^2$ be a set of axis-aligned squares.
The intersection graph $G[\S]$ is the graph with vertex set $\S$ and with an edge between squares $\sigma, \sigma' \in \S$ whenever they intersect. 
We say that two squares $\sigma, \sigma'$ are \emph{connected} if there exists a path between their corresponding vertices in $G[\S]$.
The set $\S$ is a fully dynamic set subject to (adversarial) insertions and deletions of squares.
We wish to maintain $\S$ in a data structure supporting \emph{connectivity queries} between squares in $\S$. We denote the diameter of square $\sigma \in \S$ by $|\sigma|$.
We consider three settings with different restrictions on the square diameters in $\S$:
\begin{itemize}
\item The \emph{fixed diameter range} setting. Here, the input specifies some $P$ and  each $\sigma \in \S$ has a diameter in $[\frac{1}{P}, 1]$ for some fixed $P$. 
\item The \emph{bounded aspect ratio} setting. Here, the input specifies some $P$ and at all times, for all $\sigma, \sigma' \in \S$ we have $\frac{|\sigma|}{|\sigma'|} \leq P$. 
\item The \emph{adaptive aspect ratio} $\psi$ setting in which arbitrary insertions and deletions may occur. Let $S$ be the set of squares before an update and $S'$ be the set of squares after an update. We define $\psi =  \max \{ \frac{\max_{\sigma \in S} |\sigma|}{\min_{\sigma' \in S} |\sigma'|}, \frac{\max_{\sigma \in S'} |\sigma|}{\min_{\sigma' \in S'} |\sigma'|} \}$ the 
aspect ratio relevant for the update.  
\end{itemize}
\noindent
We measure the algorithmic complexity in $n \coloneqq |\S|$ and $\psi$, where $n$ is the present size of the dynamic set $\S$. 
At all times, we maintain some minimal axis-aligned square $F$ that contains $\S$, and the coordinates of $F$ are powers of $2$. 

\subparagraph{Results.} \label{sec:results}
 The data structure of Kaplan~\etal~\cite{kaplan2022dynamic}, when adapted to axis-aligned squares by applying Range Trees~\cite{willard1985adding}, can store a dynamic set of axis-aligned squares as follows: The input specifies some value $P$ and all squares have fixed diameter range with ratio $P$. Their solution uses $O(n P)$ space and supports updates to $\S$ in $O(P \log^4 n)$ amortized  time (Table~\ref{tab:results}). 
They answer connectivity queries in $O(\log n / \log \log n)$ worst-case time.

There are several reasons why \cite{kaplan2020dynamic} uses $O(n P)$ space and  why their update bound cannot depend on the adaptive $\psi$ instead of $P$ (which we discuss further down). 
We present an adaption of their work that relies on the fact that $\S$ is a set of axis-aligned squares. Under this assumption, we adapt their data structure to work for adaptive $\psi$. We improve the space usage to near linear in $n$ and $\log \psi$. 
In full generality, we also improve the performance: allowing for update times proportional to the density around the update $\sigma$.
 More concretely, we parameterize the runtime by the size of two sets $\mathcal{C}(\sigma)$ and $\mathcal{P}(\sigma)$. Intuitively, these sets contain squares in $\sigma$ or squares around $\sigma$. These sets have size at most $O(\min\{\psi,n\})$.

 Our result is a technical contribution, that examines and refines the data structure in~\cite{kaplan2022dynamic} in the special case where $\S$ is a set of axis-aligned squares. 
To detail our contribution, we now present a technical overview, where we reference several concepts whose formal definitions are presented in their respective sections.
The core component is a quadtree that stores $\S$.

\subparagraph*{The existing pipeline.}
We can describe the data structure of~\cite{kaplan2022dynamic} (adapted to squares) on a high-level:  
They construct a quadtree $H(\S)$ in which quadtree cells may store squares in $\S$.
For any quadtree cell $C$, we denote by $\pi(C)$ the set of squares stored in $C$.
A crucial definition is the concept of a \emph{perimeter}. 
For a square $\sigma$, its perimeter $\mathbb{P}^*(\sigma, P)$ is intuitively a ring of $\Theta(P)$ quadtree cells of size at least $\frac{1}{4 P}$ that are sufficiently close to the boundary of $\sigma$. 
Using this concept, their data structure is a pipeline of five components (Fig~\ref{fig:pipeline}):
\begin{enumerate}
\item The set $\S$ gets stored in a compressed quadtree $H(\S)$, such that for every $\sigma \in \S$, the quadtree contains $\mathbb{P}^*(\sigma, P)$.\footnote{Whilst originally their approach is a forest of quadtrees, we note that since each root of the forest is disjoint, the whole solution can be stored as a quadtree. }
This quadtree has $\Theta(P \cdot n)$ cells. For each $\sigma \in \S$, its storing cell is the maximal quadtree cell contained in $\sigma$.
\item They store all (maximal) quadtree cells contained in some $\sigma \in \S$ in a special ancestor data structure. We leave out the details of this structure, as we show that it suffices to use the well-studied marked-ancestor data structure by Alstrup, and Husfeldt, Rauhe~\cite{alstrup1998marked}.  
\item For each square $\sigma \in \S$ with storing cell $C_\sigma$, for each quadtree cell $C_2 \in \mathbb{P}^*(\sigma, P)$, 
they store a square intersection data structure (a range tree).
This data structure stores the squares $R \subset \pi(C_\sigma)$ that have $C_2$ in their perimeter (i.e. $R = \{ \gamma \in \pi(C_\sigma) \mid C_2 \in \mathbb{P}^*(\sigma, P) \}$).  
\item  For each square $\sigma$ with storing cell $C_\sigma$, for each quadtree cell $C_2 \in \mathbb{P}^*(\sigma, P)$, they store a maximal bichromatic matching (\MBM) in the graph $G[R \cup B]$ with $B = \pi(C_2)$ with $R = \{ \gamma \in \pi(C_\sigma) \mid C_2 \in \mathbb{P}^*(\sigma, P) \}$.
\item They store a \emph{proxy graph} over the quadtree in the dynamic connectivity data structure by Holm, Lichtenberg, and Thorup ($HLT$) \cite{holm2001poly}. This graph contains an edge between two cells $C_1, C_2$ if and only if their maximal bichromatic matching is not empty. 
\end{enumerate}%
\noindent
They subsequently support connectivity queries for a query $(\sigma, \rho)$ as follows. 
    Given $\sigma$, obtain a pointer to its storing cell $C_\sigma$. 
    Using their ancestor data structure, they obtain the largest ancestor $C_a$ that is contained in a square $\sigma^* \in \S$. Let $C^*$ be its storing cell. 
    Doing the same procedure for $\rho$ gives a cell $R^*$. 
    They show that $(\sigma, \rho)$ are connected in $G[\S]$ if and only if $(C^*, R^*)$ are connected in the proxy graph; which they test in  $O(\log n / \log \log n)$ time. 
\begin{figure}[b]
  \centering
  \includegraphics[page=4]{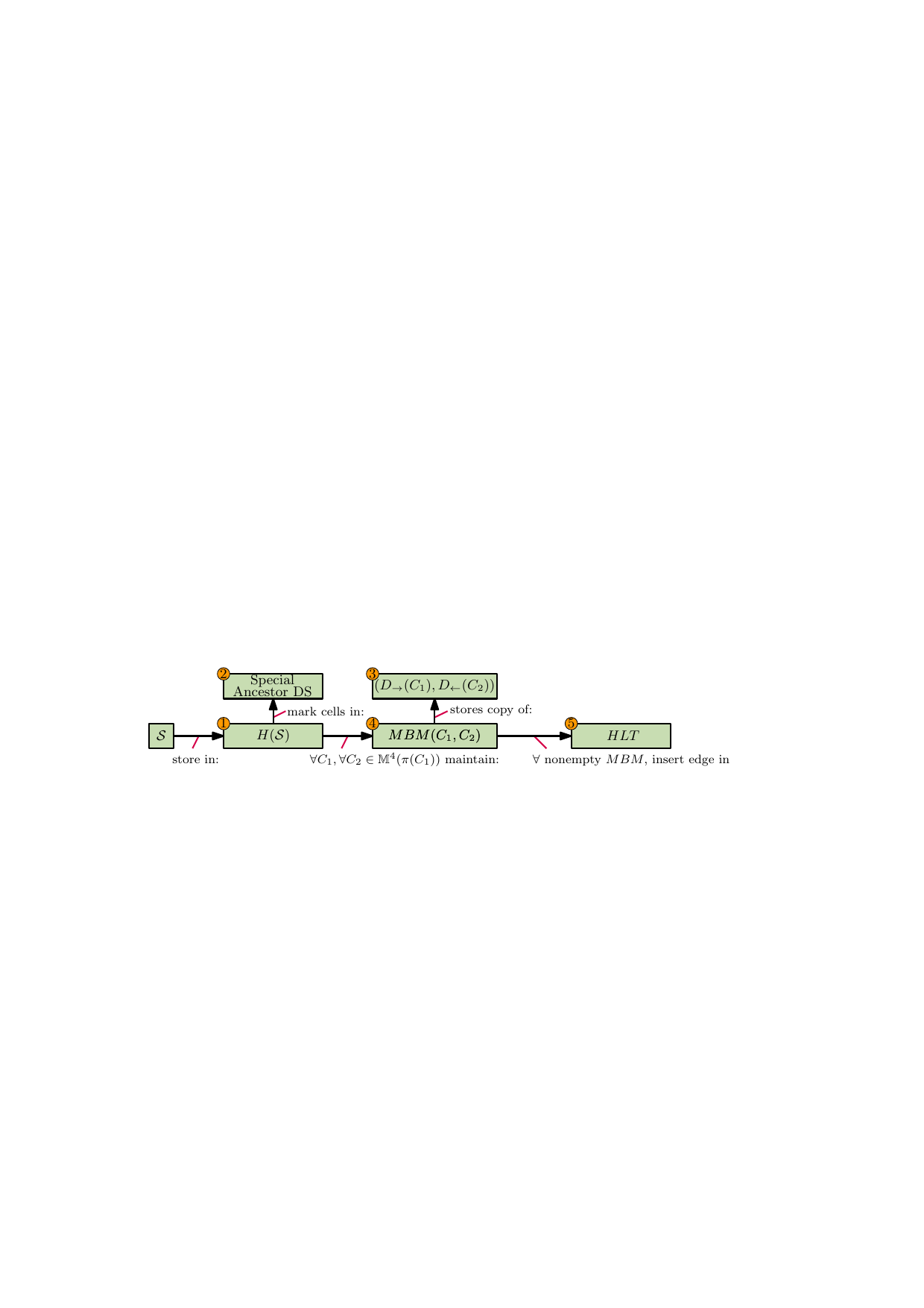}
  \caption{
  The five-component pipeline by Kaplan~\etal where the arrows indicate dependencies.
  }
  \label{fig:pipeline}
\end{figure}
\subparagraph{Why the space usage is high, and update times are not adaptive.}
For every $\sigma \in \S$, this pipeline creates a set of $\Theta(P)$ quadtree cells with sizes in $[\frac{1}{4P}, 1]$ which we denote by $\mathbb{P}^*(\sigma, P)$.
For each quadtree cell $C_2 \in \mathbb{P}^*(\sigma, P)$, $\sigma$ gets stored in a square intersection data structure, even if the quadtree cell $C_2$ is empty and/or has no other squares nearby.
This approach makes it require a lot of space. Moreover, this approach fails when the aspect ratio becomes adaptive: suppose that $\S$ has $n-1$ squares with diameter $\frac{1}{P}$ and one unit square. We replace the unit square with a square of diameter $\frac{1}{P^2}$. 
The aspect ratio remains bounded by $P$. 
However (after rescaling the plane by a factor $\frac{1}{P}$) the quadtree no longer contains for every $\sigma \in \S$ the set $\mathbb{P}^*(\sigma, P)$ as  (after rescaling the plane) it only contains quadtree cells of size $1$. Reconstructing these requires $\Omega(P n)$ time. 

\subparagraph*{Our adaption.}
We improve this pipeline in several ways based on a few key insights:
First, we revisit the definition of \emph{perimeter}; presenting a new definition $\mathcal{P}(\sigma)$ which intuitively contains only cells in $\mathbb{P}^*(\sigma, \psi)$ that store at least one square. Since we only store data structures on cells that store at least one square, we save space and allow $\psi$ to become fully adaptive. This introduces a new challenge, as we need to work with significantly fewer precomputed inormation.  Concretely, we do the following (Figure~\ref{fig:newpipeline}):
\begin{enumerate}
    \item We define a new type of quadtree $T(\S)$ that uses only $O(n \log \psi)$ quadtree cells.
\item We replace their custom ancestor data structure by the
  well-studied \emph{M}arked \emph{A}ncestor \emph{T}ree (MAT), simplifying the
  data structure.
\item For squares metric we create a new data structure that has deterministic guarantees and that avoids storing many copies. 
\item  For each square $\sigma \in \S$ with storing cell $C_\sigma$, for
  each quadtree cell $C_2$ in our new perimeter $\mathcal{P}(\sigma)$,
  we store a Maximal Bichromatic Matching ($\MBM^*$) in a graph $G[R
  \cup B]$. Using our new definition of perimeter, we define $R \gets \{ \gamma \in \pi(C_\sigma) \mid C_2 \in \mathcal{P}(\sigma) \}$ and $B \gets \pi(C_2)$.
  We present a new algorithm to maintain this Maximal Bichromatic Matching. 
  \item Finally, we use HLT~\cite{holm2001poly} on a proxy graph with an edge for every non empty $\MBM^*$.
\end{enumerate}

\noindent
We go through our data structures one by one in the order indicated in Figure~\ref{fig:newpipeline}.

\begin{figure}
    \centering
  \includegraphics[page=5]{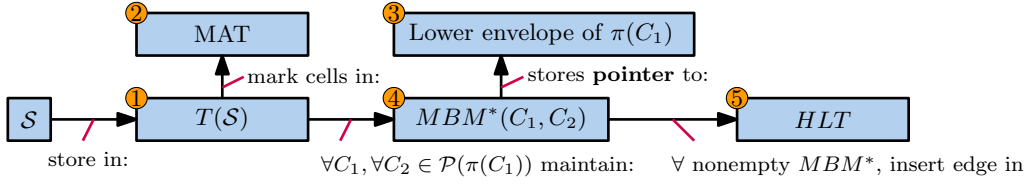}
  \caption{
  Our five-component pipeline where the arrows indicate dependencies.
  \vspace{-0.5cm}
  }
  \label{fig:newpipeline}
\end{figure}

\section{Storing disks in quadtrees}

Recall that $F$ is a (dynamic) square bounding box of $\S$. By construction, the side length of $F$ is $2^\omega$ for some integer $\omega$.
We define the square $F$ to be a \emph{quadtree cell} and we recursively define quadtree cells to be any square obtained by \emph{splitting} a cell into four equally sized closed squares.
A \emph{quadtree} $T$ on $F$ is any hierarchical decomposition obtained by recursively splitting cells. This hierarchical decomposition has a natural representation as a tree: the root is the cell $F$ and every cell has either $0$ or $4$ children depending on whether it was split. 
We denote by $\mathbb{F}$ the (infinite) set of cells that are obtained by recursively splitting all cells, starting from $F$. 
We say that a cell $C \in \mathbb{F}$ is at \emph{level} $\ell$ whenever its side length is $2^\ell$. 
We treat any quadtree $T$ as a set of cells, i.e., $T \subset \mathbb{F}$.

L\"{o}ffler, Simons, and Strash~\cite{loffler2013dynamic} use quadtrees to store arbitrary squares (or disks).  
Let $F$ be fixed and $\sigma$ be some square with center $s$. The unique \emph{storing cell} $C_\sigma \in \mathbb{F}$ is a largest cell in $\mathbb{F}$ that contains the center $s$ and that itself is contained in $\sigma$ (if $s$ lies at the intersection of multiple largest contained cells, we define the bottom left cell to be $C_\sigma$).
L\"{o}ffler, Simons and Strash subsequently say that $C_\sigma$ \emph{stores} $\sigma$. 
For a cell $C \in \mathbb{F}$, we denote by $\pi(C)$ all square of $\S$ stored in $C$. We will define five different cell sets to define our quadtree storing $\S$.

\subparagraph{Quadtree cell sets.}
We assume that we have some bounding box $F$ (which induces the set $\mathbb{F}$) and some set of storing cells in  $\mathbb{F}$.
We subsequently define two types of subsets of $\mathbb{F}$:
\begin{itemize}
    \item definitions that originate from~\cite{kaplan2022dynamic} and depend on some $P \in \mathbb{R}$  (blackboard font), 
    \item and new definitions depending on only the storing cells (calligraphic font).
\end{itemize}
Quadtree cells $C, C'$ are \emph{neighboring} whenever they are not descendants of one another and intersect in their boundary.
For any property, a quadtree cell $C$ in $\mathbb{F}$ is \emph{maximal} if there does not exist an ancestor of $C$ in $\mathbb{F}$ with the same property. 

\noindent
Let $\sigma \in \S$ have a storing cell $C_\sigma$. We define (Figure~\ref{fig:quadtree-def}):
\begin{itemize}
    \item $\mathcal{N}(\sigma) \subset \mathbb{F}$ as the cells of size $|C_\sigma|$ neighboring $C_\sigma$ or a neighbor of $C_\sigma$.
        \item $\mathbb{C}^*(\sigma, P) \subset \mathbb{F}$ as the maximal cells $C' \in \mathbb{F}$ with $C' \subset \sigma$  and $|C| \in [\frac{1}{4 P}, 1]$. 
    \item $\mathcal{C}(\sigma) \subset \mathbb{F}$ as the maximal cells $C' \in \mathbb{F}$ with $C' \subset \sigma$ that contain at least one storing cell.
    \item $\mathbb{P}^*(\sigma, P) \subset \mathbb{F}$ as the \emph{perimeter} of $\sigma$. These are all $C' \in \mathbb{F}$ contained in a cell in $\mathcal{N}(\sigma)$ with $|C'| \in [\frac{1}{4P}, 1]$ with the additional property that there exists a square $\rho \subset \mathbb{R}^2$ where $C'$ would be the storing cell of $\rho$ if $\rho \in S$, and, $\rho$ intersects the boundary of $\sigma$. 
    \item $\mathcal{P}(\sigma) \subset \mathbb{F}$ as all \emph{storing cells} with diameter at most $|\sigma|$ that, when scaled around their center by a factor 5, intersect \emph{the boundary} $\sigma$. Note that these may be contained in $\sigma$.
\end{itemize}
For $R \subseteq \S$, we define $\mathcal{N}(R)$ to be the union of all $\mathcal{N}(\sigma)$ with $\sigma \in R$. All other sets (e.g., $\mathcal{P}(R)$) are defined analogously.
Let $\ell_{\min}$ (resp. $\ell_{\max}$) be the smallest (resp. largest) level that contains any cell in any of the five sets. Per definition, $\ell_{\max} - \ell_{\min} \in O( \log \psi)$.

\begin{lemma}[Lemma~4.2 in \cite{kaplan2022dynamic}]
\label{lem:neighborhood_size}
For any $\sigma \in \S$, if regions are disks under an $L_p$ metric with a diameter in $[ \frac{1}{4 P}, 1]$ then:
$
|\mathbb{C}^*(\sigma, P)| \in O(P) \textnormal{ and } |\mathbb{P}^*(\sigma, P)| \in O(P).
$
\end{lemma}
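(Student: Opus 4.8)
The plan is to bound both quantities by a level-by-level packing argument, exploiting that an $L_p$ disk $\sigma$ is convex and fat: few quadtree cells of a given side length can lie near $\partial\sigma$, and these per-level counts decay geometrically in the cell size, so the whole sum is governed by the smallest admissible level $2^{\ell_{\min}} \approx \frac{1}{4P}$.

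For $|\mathbb{C}^*(\sigma, P)|$ I would first dispose of the cells of side length exactly $1$: since $\sigma$ has diameter at most $1$, only $O(1)$ unit cells fit inside $\sigma$. For a maximal cell $C'$ of side $2^\ell < 1$ with $C'\subseteq\sigma$, maximality forces its parent $\hat C$ (side $2^{\ell+1} \le 1$) to \emph{not} be contained in $\sigma$, hence $\hat C$ meets the boundary $\partial\sigma$. Using the standard fact that the boundary of a convex region of diameter $O(1)$ crosses $O(2^{-\ell-1}+1)=O(2^{-\ell})$ cells of side $2^{\ell+1}$, and that each such parent has only $4$ children, the number of level-$\ell$ cells in $\mathbb{C}^*(\sigma,P)$ is $O(2^{-\ell})$. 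Summing over $-\lceil\log_2 4P\rceil \le \ell \le 0$ gives a geometric series dominated by its smallest level, $\sum_{\ell} O(2^{-\ell}) = O(2^{\log_2 4P}) = O(P)$.

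For $|\mathbb{P}^*(\sigma,P)|$ the binding observation is that, by definition, every cell $C' \in \mathbb{P}^*(\sigma,P)$ is the storing cell of some $L_p$ disk $\rho$ meeting $\partial\sigma$. Because $L_p$ disks are fat, a disk of diameter $d$ contains a grid-aligned cell of side $\Theta(d)$ about its center, so any storing cell satisfies $|C'| = \Theta(|\rho|)$ (the upper bound $|C'|\le|\rho|$ being immediate from $C'\subseteq\rho$). Hence a level-$\ell$ cell $C'$ comes from a disk $\rho$ of diameter $\Theta(2^\ell)$, and since $C'\subseteq\rho$ and $\rho$ meets $\partial\sigma$, the cell $C'$ lies within distance $O(2^\ell)$ of $\partial\sigma$. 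Thus the level-$\ell$ cells of $\mathbb{P}^*(\sigma,P)$ are disjoint cells of side $2^\ell$ contained in a band of width $O(2^\ell)$ around the convex curve $\partial\sigma$ of length $O(1)$; that band has area $O(2^\ell)$, so it holds $O(2^{-\ell})$ such cells. Summing over the same $O(\log P)$ levels again yields $O(P)$, and the extra constraint that $C'$ be contained in a cell of $\mathcal{N}(\sigma)$ only restricts the count further.

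The main obstacle, and the only place the $L_p$ hypothesis is genuinely used, is establishing the two geometric facts that drive the packing bounds: (i) the fatness estimate $|C'| = \Theta(|\rho|)$ for storing cells, which requires that every $L_p$ disk of diameter $d$ contains a \emph{grid-aligned} cell of side $\Omega(d)$ near its center despite quadtree alignment, and (ii) the convex-boundary bound that a convex curve of diameter $O(1)$ crosses only $O(2^{-\ell})$ grid cells of side $2^\ell$. Both are standard, but I would isolate them as separate claims and check that the alignment loss in (i) costs only a constant factor before substituting them into the two geometric sums above.
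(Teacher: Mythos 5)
Your proof is correct, and it is essentially the intended one: the paper does not prove this statement itself but imports it verbatim as Lemma~4.2 of \cite{kaplan2022dynamic}, whose proof is the same level-by-level packing argument you give --- maximal cells of $\mathbb{C}^*(\sigma,P)$ are charged to parents crossing the convex boundary $\partial\sigma$, cells of $\mathbb{P}^*(\sigma,P)$ at level $\ell$ are pinned to an $O(2^\ell)$-width band around $\partial\sigma$ via the fatness bound $|C'|=\Theta(|\rho|)$ for storing cells, and both per-level counts $O(2^{-\ell})$ sum geometrically over the $O(\log P)$ admissible levels to $O(P)$. The two auxiliary facts you isolate (the $\Theta(d)$ grid-aligned inscribed cell for an $L_p$ disk of diameter $d$, with only constant-factor alignment loss, and the $O(2^{-\ell})$ cell-crossing bound for a convex curve of diameter $O(1)$) are exactly the standard ingredients, so no gap remains.
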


\subparagraph{Compressed quadtrees.}
Denote by $X \subset \mathbb{F}$ some set of cells.
Denote by $T_X$ the minimal quadtree over some bounding box $F$ that contains all cells in $X$.
The size of $T_X$ can be arbitrarily large, even when $|X|$ is constant. 
To reduce quadtree space complexity, a quadtree may be \emph{compressed}~\cite{har2010quadtrees}. 
An $\alpha$-compressed quadtree (for some variable $\alpha \geq 1$) is defined as follows:
let $C$ be a quadtree cell in $T_X$ and $C_\alpha$ be the smallest descendant of $C$ such that (1) $|C| \geq 2^\alpha |C_\alpha|$ and (2) all cells in $X$ that are contained in $C$ are also contained in $C_\alpha$. Then $C$ has not $4$ children, but only $C_\alpha$ as its child. 
Given some constant $\alpha$, every quadtree has a unique maximally compressed equivalent that has size linear in $|X|$~\cite{har2010quadtrees}.
Given the above definitions, we want to mention two different quadtrees that store $\S$: 
\begin{itemize}
    \item \cite{loffler2013dynamic} defines $L(\S)$ as the compressed quadtree storing $\mathcal{N}(\S)$.
    \item \cite{kaplan2022dynamic} defines $H(\S)$ as the compressed quadtree storing $\mathcal{N}(\S)$, $\mathbb{C}^*(\S)$ and $\mathbb{P}^*(\S)$.   
\end{itemize}

\subparagraph*{Our quadtree.}
We define our quadtree $T(\S)$ as the compressed quadtree storing $\mathcal{N}(\S)$, where cells in $\mathcal{C}(\S)$ are uncompressed.
Note that any quadtree that contains $\mathcal{N}(\S)$, also contains the cells in $\mathcal{C}(\S)$.
The key difference between $L(\S)$ and $T(\S)$ is that we decompress the cells in $\mathcal{C}(\S)$, adding them to memory (i.e., we treat these cells as storing cells in the quadtree).
Since these cells are uncompressed, this structure uses more space than the $O(n)$ cells in $\mathcal{N}(\S)$. 
If we view quadtrees as a collection of (uncompressed) cells, then $L(\S) \subset T(\S) \subset H(\S)$. 
By Lemma~\ref{lem:neighborhood_size}, Kaplan \etal~\cite{kaplan2022dynamic} prove that $|\mathbb{C}^*(\S, P)|, |\mathbb{P}^*(\S, P)| \in O(P n)$. 
It would be easy to show that $|\mathcal{C}(\S)|, |\mathcal{P}(\S)| \in O(n \psi)$. But through clever counting, we prove that storing $T(\S)$ uses only $O(n \log \psi)$ space instead (Theorem~\ref{thm:quadtree_size}).

\begin{figure}
  \centering
  \includegraphics[width=\linewidth]{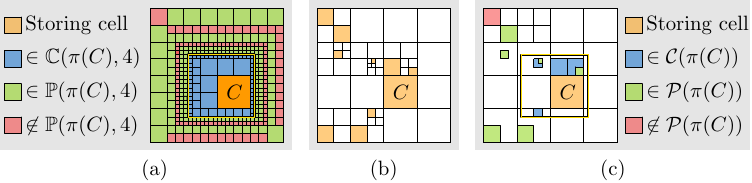}
  \caption{
      (a) We show for a square $\sigma$ its storing cell in orange. 
We set $P = \psi =  2$ and show our sets. 
Many cells in $\mathbb{C}^*(\pi(C), 8)$ are also in $\mathbb{P}^*(\pi(C), 8)$.
        (b) The minimal quadtree that contains a set of storing cells. 
        (c) Given the quadtree with storing cells, we illustrate our sets. Red cells are storing cells that occur in neither $\mathcal{C}(\pi(C))$ nor $\mathcal{P}(\pi(C))$. 
\vspace{-0.5cm}}
  \label{fig:quadtree-def}
\end{figure}

\subsection{Space complexity of the quadtree}
We upper bound the size of $\mathcal{N}(\S)$ and $\mathcal{C}(\S)$ (and thus the size of $T(\S)$).

\begin{restatable}{observation}{neighborhood}
    \label{obs:neighborhood}
    There are $O(n)$ cells in $\mathcal{N}(\S)$. 
\end{restatable}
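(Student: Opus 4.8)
The plan is to show that each individual set $\mathcal{N}(\sigma)$ has constant size, and then sum over the at most $n$ squares. Fix a square $\sigma \in \S$ with storing cell $C_\sigma$, and suppose $C_\sigma$ lies at level $\ell$, i.e.\ $|C_\sigma| = 2^\ell$. By definition every cell of $\mathcal{N}(\sigma)$ also has side length $2^\ell$, so all cells in $\mathcal{N}(\sigma)$ live in the regular grid $G_\ell$ of level-$\ell$ cells tiling $F$. Two level-$\ell$ cells are never descendants of one another, so for cells of this common size the neighboring relation coincides with ordinary grid adjacency: a cell $D \in G_\ell$ neighbors $C_\sigma$ exactly when $D$ shares an edge or a corner with $C_\sigma$.

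First I would bound the one-step neighborhood: in the grid $G_\ell$ a cell has at most eight same-size neighbors (the four edge-adjacent and four corner-adjacent cells). Consequently, the cells neighboring $C_\sigma$ or neighboring a neighbor of $C_\sigma$ are all contained in the $5 \times 5$ block of level-$\ell$ cells centered at $C_\sigma$ (one reaches them by moving at most two grid steps away in each axis direction). Hence $|\mathcal{N}(\sigma)| \le 25 = O(1)$, and the precise constant is immaterial.

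It then remains to sum over the squares. Since $\mathcal{N}(\S) = \bigcup_{\sigma \in \S} \mathcal{N}(\sigma)$ and $|\S| = n$, we obtain
\[
|\mathcal{N}(\S)| \;\le\; \sum_{\sigma \in \S} |\mathcal{N}(\sigma)| \;\le\; 25\, n \;=\; O(n).
\]

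I do not expect any real obstacle here: the statement is a counting observation whose only content is that same-size quadtree cells have $O(1)$ neighbors, so that neighbors-of-neighbors stay within a constant-size grid window and the per-square contribution is bounded independently of $\psi$. The genuinely delicate counting is deferred to the bounds on $\mathcal{C}(\S)$ and $\mathcal{P}(\S)$ in Theorem~\ref{thm:quadtree_size}, where distinct squares may have storing cells spread across many different levels and the naive per-square bound of $O(\psi)$ must be sharpened to an amortized $O(\log \psi)$.
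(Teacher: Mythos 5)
Your proof is correct and takes the intended route: the paper states this as an unproved observation precisely because, as you argue, every cell of $\mathcal{N}(\sigma)$ has the same side length as $C_\sigma$ and lies within the $5\times 5$ grid block around $C_\sigma$, so $|\mathcal{N}(\sigma)| \le 25$ and summing over the $n$ squares gives $O(n)$. Your closing remark is also on point --- the constant per-square bound is what distinguishes $\mathcal{N}(\S)$ from $\mathcal{C}(\S)$ and $\mathcal{P}(\S)$, where the more careful level-based counting of Lemma~\ref{lemma:uphill_containment} is required.
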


\begin{restatable}{lemma}{uphillcontainment}
\label{lemma:uphill_containment}
    Let $C$ be a storing cell. 
    Denote by $Z$ any cell, such that there could exist a $\sigma$ stored in $Z$ where an ancestor of $C$ lies in $\mathcal{C}(\sigma)$. There are at most $O(\log \psi)$ such cells and we can report them in $O(\log \psi \log n)$ time.
\end{restatable}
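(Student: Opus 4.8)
The plan is to first simplify the defining condition into a single containment and then count the eligible cells $Z$ by organising them according to their level.

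\emph{Reducing the condition.} Suppose $\sigma$ is stored in $Z$ (so $Z = C_\sigma$) and some ancestor $A$ of $C$ lies in $\mathcal{C}(\sigma)$. Since $A \in \mathcal{C}(\sigma)$ we have $A \subseteq \sigma$, and since $A$ is an ancestor of the storing cell $C$ we have $C \subseteq A$; hence the single relation $C \subseteq \sigma$ is necessary. (It is in fact sufficient: if $C \subseteq \sigma$, then the largest ancestor of $C$ contained in $\sigma$ is a maximal cell contained in $\sigma$ that contains the storing cell $C$, so it lies in $\mathcal{C}(\sigma)$.) I will therefore count cells $Z$ for which there is a square $\sigma$ with $C_\sigma = Z$ and $C \subseteq \sigma$. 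Two geometric facts drive the count. First, by the definition of the storing cell, $|Z| \le |\sigma| < 4|Z|$, i.e. $|\sigma| = \Theta(|Z|)$. Second, both $C$ and $Z$ are contained in $\sigma$, so the distance between them is at most $\operatorname{diam}(\sigma) = O(|\sigma|) = O(|Z|)$.

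\emph{Counting by level.} Fix a level $m$ and consider the eligible cells $Z$ with $|Z| = 2^m$. By the two facts above, every such $Z$ lies within distance $O(2^m)$ of the fixed cell $C$, and the cells of side $2^m$ fitting inside an $O(2^m) \times O(2^m)$ window around $C$ number only $O(1)$; thus there are $O(1)$ eligible cells per level. It remains to bound the number of relevant levels. Since $C \subseteq \sigma$ we have $|\sigma| \ge |C|$ and hence $|Z| \ge |C|/4$, so $m \ge \operatorname{level}(C) - 2$. For the upper end, $C$ is the storing cell of a square of diameter $\Theta(|C|)$, so by the aspect-ratio bound any relevant $\sigma$ has $|\sigma| = O(\psi |C|)$, giving $m \le \operatorname{level}(C) + O(\log \psi)$ (equivalently, $Z$ lies in the level window $[\ell_{\min}, \ell_{\max}]$, whose width is $O(\log \psi)$). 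Summing $O(1)$ cells over these $O(\log \psi)$ levels yields $O(\log \psi)$ eligible cells $Z$ in total.

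\emph{Reporting.} For each of the $O(\log \psi)$ levels I generate the $O(1)$ candidate cells of that level lying in the $O(2^m)$-window around $C$ directly from the coordinates of $C$, and for each candidate I test in $O(\log n)$ time---by a single location query in the compressed quadtree $T(\S)$---whether it is a genuine storing cell admitting such a $\sigma$. This spends $O(\log \psi \log n)$ time in total.

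\emph{Main obstacle.} The crux is the counting, not the geometry of a single cell. A naive argument that first fixes the ancestor $A$ of $C$ ($O(\log \psi)$ choices) and then the size of $\sigma$ ($O(\log \psi)$ choices) only yields $O(\log^2 \psi)$. The improvement comes from discarding $A$ altogether: the only consequence of ``$A \in \mathcal{C}(\sigma)$'' that we need is the containment $C \subseteq \sigma$, after which organising the count by the level of $Z$---and using that $\sigma$ simultaneously contains $C$ and has diameter $\Theta(|Z|)$---pins $Z$ down to $O(1)$ choices per level. The one place that requires care is justifying the upper cutoff on the level of $Z$; this is precisely where the adaptive aspect ratio $\psi$ (rather than an a priori $P$) enters, via $|\sigma| = O(\psi |C|)$.
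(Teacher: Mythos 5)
Your proof is correct and follows essentially the same route as the paper: a per-level packing argument showing $O(1)$ eligible cells $Z$ at each level (because squares stored in $Z$ have diameter $\Theta(|Z|)$ and must be close to $C$), summed over the $O(\log \psi)$ admissible levels, with an $O(\log n)$ quadtree location per level for reporting. Your only deviation is cosmetic: you first reduce the condition to the single containment $C \subseteq \sigma$, whereas the paper works with the weaker necessary condition that a square stored in $Z$ intersects the ancestor $C_j$ of $C$ at level $j$ --- both yield the same $O(1)$-per-level count.
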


\begin{proof}
    Let $|C| = 2^\ell$ (i.e., $C$ is at level $\ell$). 
By definition, $Z$ is in a level $j \geq \ell$.
Fix a level $j$.
For any cell $Z$ at level $j$,  $\mathcal{C}(\pi(Z) )$ contains an ancestor of $C$ only if a square in $\pi(Z)$ intersects (or contains) $C_j$ (the ancestor of $C$ at level $j$). 
As the diameter of squares in $\pi(Z)$ is at most a factor 5 larger than the diameter of $Z$, there are at most $O(1)$ cells at level $j$ that \emph{could} store a square $\rho$ that intersects $C_j$ (the neighbors of $C_j$, their neighbors and possibly their neighbors).
We can find these cells in $O(\log n)$ time by doing a point location in each cell for the level $j$.
The fact that per definition of $\psi$, all storing cells and all cells in $\mathcal{C}(\S)$ lie in a range of $O(\log \psi)$ levels concludes the proof.
\end{proof}

\begin{restatable}{theorem}{size}
\label{thm:quadtree_size}
    At all times, the compressed quadtree $T(\S)$ uses $O(n \log \psi)$ space. 
\end{restatable}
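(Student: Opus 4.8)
The plan is to decompose the size of $T(\S)$ into two contributions: the underlying compressed quadtree that stores $\mathcal{N}(\S)$, and the additional cells created by decompressing the cells of $\mathcal{C}(\S)$. First I would invoke Observation~\ref{obs:neighborhood} together with the standard fact that a compressed quadtree over a set $X$ of cells has size $O(|X|)$; this bounds the base structure by $O(n)$. Since decompressing a cell that lies on a compressed edge inserts that one cell as an explicit node with a single (still possibly compressed) child, the extra space is $O(1)$ per distinct decompressed cell. Hence it suffices to show that the number of \emph{distinct} cells in $\mathcal{C}(\S) = \bigcup_{\sigma \in \S} \mathcal{C}(\sigma)$ is $O(n \log \psi)$; the naive bound $O(n\psi)$ overcounts precisely because the same cell can appear in $\mathcal{C}(\sigma)$ for many different squares $\sigma$.

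The heart of the proof is a charging argument. Each cell $C' \in \mathcal{C}(\S)$ is, by definition, a maximal cell contained in some $\sigma$ that contains at least one storing cell; in particular $C'$ is an ancestor of (or equal to) some storing cell. I would fix, for each distinct $C'$, a canonical storing cell $C \subseteq C'$ (say the bottom-left-most smallest one) and charge $C'$ to $C$. This makes the charging a well-defined function from the distinct cells of $\mathcal{C}(\S)$ to the set of storing cells. I then bound the number of cells charged to a fixed storing cell $C$: every such cell is an ancestor of $C$, and there is a unique ancestor of $C$ at each level, so distinct cells charged to $C$ occupy distinct levels. Since all cells of $\mathcal{C}(\S)$ lie in the level range $[\ell_{\min}, \ell_{\max}]$ with $\ell_{\max} - \ell_{\min} \in O(\log \psi)$, at most $O(\log \psi)$ cells are charged to $C$. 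Here Lemma~\ref{lemma:uphill_containment} gives the same $O(\log \psi)$ bound from the complementary viewpoint of counting the storing cells $Z$ that can witness such an ancestor, and I would cite it to keep the bound self-contained.

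Summing over all storing cells, of which there are at most $n$ (one per square), yields $|\mathcal{C}(\S)| \in O(n \log \psi)$ distinct cells, and combining with the $O(n)$ base gives the claimed $O(n \log \psi)$ total. I expect the main obstacle to be the bookkeeping that decompression is genuinely local, i.e.\ that making a cell of $\mathcal{C}(\S)$ explicit adds only $O(1)$ nodes and never forces the surrounding compressed edges to subdivide further; this relies on the remark that any quadtree containing $\mathcal{N}(\S)$ already contains the cells of $\mathcal{C}(\S)$ implicitly, so no new branching is introduced. The other point requiring care is to verify that the level-range bound $\ell_{\max} - \ell_{\min} \in O(\log \psi)$ indeed applies to every charged cell, which follows because charged cells are ancestors of storing cells and therefore lie between the smallest storing cell and the largest containing square.
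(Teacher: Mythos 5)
Your proof is correct and takes essentially the same route as the paper: the paper's (one-line) proof likewise combines Observation~\ref{obs:neighborhood} for the $O(n)$ base structure with the $O(\log\psi)$ level-range fact underlying Lemma~\ref{lemma:uphill_containment}, and your charging of each distinct cell of $\mathcal{C}(\S)$ to a contained storing cell (at most one ancestor per level, over $O(\log\psi)$ levels, with at most $n$ storing cells) is precisely the counting that the paper's terse ``immediately imply'' leaves implicit. Your additional care about decompression being local (one extra node per cell of $\mathcal{C}(\S)$, by the standard linear-size property of compressed quadtrees) is consistent with the paper's definition of $T(\S)$ and introduces no gap.
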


\begin{proof}
    Since $T(\S)$ is the quadtree that stores $\mathcal{N}(S)$ and $\mathcal{C}(S)$, and compressed quadtrees have linear space in the number of uncompressed cells, Observation~\ref{obs:neighborhood} and Lemmas~\ref{lemma:uphill_containment} immediately imply the theorem. 
\end{proof}

\noindent
We additionally upper bound the size of two more quadtree cell types:
\begin{restatable}{lemma}{perimetersize}
\label{lemma:perimeter_size}
Let $\S$ be a set of squares with aspect ratio $\psi$. For all $\sigma \in \S$ with storing cell $C_\sigma$ there are at most $O(\psi)$ cells in $\mathcal{C}(\sigma)$ and $\mathcal{P}(\sigma)$ and $O(\psi^2)$ cells in $\mathcal{P}(\pi(C_\sigma))$.
\end{restatable}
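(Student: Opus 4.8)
The plan is to bound each of the three sets by a level-by-level count that forms a geometric series dominated by its smallest (finest) level. The one fact I would establish first is that the storing cell $C_\gamma$ of any square $\gamma$ has side length $\Theta(|\gamma|)$: the largest grid cell of $\mathbb{F}$ containing the center of $\gamma$ and contained in $\gamma$ has side length between $|\gamma|/4$ and $|\gamma|$, since a cell of side $2^k \le |\gamma|/2$ containing the center always fits (the center is at distance $|\gamma|/2$ from $\partial\gamma$). Because all diameters lie in a range of ratio $\psi$ and $|\sigma| \le \max_{\gamma}|\gamma| = \psi \cdot \min_{\gamma}|\gamma|$, every storing cell has side length at least $\Theta(|\sigma|/\psi)$. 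Hence all cells I need to count occupy a window of only $O(\log\psi)$ consecutive levels, with finest side length $\Theta(|\sigma|/\psi)$ and coarsest $\Theta(|\sigma|)$.

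For $\mathcal{C}(\sigma)$ and $\mathcal{P}(\sigma)$ I would use a perimeter (one-dimensional) count. A cell $C'$ lies in $\mathcal{C}(\sigma)$ iff $C' \subseteq \sigma$, $C'$ contains a storing cell, and $C'$ is maximal; since the parent of $C'$ also contains that storing cell, maximality forces the parent to cross $\partial\sigma$, so every level-$k$ cell of $\mathcal{C}(\sigma)$ lies within distance $O(2^k)$ of $\partial\sigma$. As $\partial\sigma$ has length $4|\sigma|$, the number of level-$k$ cells near it is $O(|\sigma|/2^k + 1)$. Summing over the $O(\log\psi)$ relevant levels gives a geometric series dominated by the finest level $2^k = \Theta(|\sigma|/\psi)$, so $|\mathcal{C}(\sigma)| = O(\psi)$. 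The same argument applies to $\mathcal{P}(\sigma)$: each of its cells is a storing cell whose factor-$5$ dilation meets $\partial\sigma$, hence a level-$k$ cell lies within distance $O(2^k)$ of $\partial\sigma$; as cells at a common level are interior-disjoint, the identical perimeter count and geometric sum yield $|\mathcal{P}(\sigma)| = O(\psi)$.

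For $\mathcal{P}(\pi(C_\sigma)) = \bigcup_{\rho \in \pi(C_\sigma)} \mathcal{P}(\rho)$ I would switch to an area (two-dimensional) count. Every $\rho \in \pi(C_\sigma)$ has $C_\sigma$ as its storing cell, so $|\rho| = \Theta(|C_\sigma|)$ and $C_\sigma \subseteq \rho$; therefore each $\rho$, and in particular $\partial\rho$, is contained in a fixed box $B$ of side $O(|C_\sigma|)$ centered at $C_\sigma$. Any storing cell in $\mathcal{P}(\pi(C_\sigma))$ has diameter at most $|\rho| = \Theta(|C_\sigma|)$ and its dilation meets some $\partial\rho \subseteq B$, so it lies in an $O(1)$-enlargement of $B$. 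Since the union of the boundaries $\partial\rho$ now fills a two-dimensional region of area $O(|C_\sigma|^2)$ rather than a single curve, the number of interior-disjoint storing cells at level $k$ inside $B$ is $O((|C_\sigma|/2^k)^2 + 1)$. Summing over the $O(\log\psi)$ levels with $2^k \in [\Theta(|C_\sigma|/\psi),\, |C_\sigma|]$ again gives a geometric series dominated by its finest level, for a total of $O(\psi^2)$.

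The hard part, and the reason the last bound is genuinely different, is that $|\pi(C_\sigma)|$ \emph{cannot} be bounded by a constant: arbitrarily many equal-sized squares may share the storing cell $C_\sigma$, so $O(\psi^2)$ cannot be obtained by multiplying $|\pi(C_\sigma)|$ by the single-square bound $O(\psi)$. Instead I argue about the union directly: all relevant cells live in one common box $B$, and it is the area-based packing bound per level — as opposed to the perimeter-based bound for a single square — that upgrades the exponent from $\psi$ to $\psi^2$. The step requiring the most care is verifying that every cell of $\mathcal{P}(\pi(C_\sigma))$ truly lies in $B$, which uses both the diameter restriction $|C_\gamma| \le |\rho|$ in the definition of $\mathcal{P}$ and the fact that all $\rho \in \pi(C_\sigma)$ are equal-sized, $\Theta(|C_\sigma|)$-diameter squares surrounding $C_\sigma$.
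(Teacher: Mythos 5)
Your proof is correct, but it takes a genuinely different route from the paper. The paper's proof rescales the plane so that all diameters lie in $[\frac{1}{\psi},1]$ and then reduces to known bounds: it invokes Lemma~4.2 of Kaplan~et~al.\ to get $|\mathbb{C}^*(\sigma,\psi)|,|\mathbb{P}^*(\sigma,\psi)| \in O(\psi)$, observes $\mathcal{C}(\sigma) \subseteq \mathbb{C}^*(\sigma,\psi)$, and handles $\mathcal{P}(\sigma)$ by a charging argument --- each cell $C \in \mathcal{P}(\sigma) \setminus \mathbb{P}^*(\sigma,\psi)$ is charged to a neighbor $C'$ of size $2|C|$ closer to the center of $\sigma$, which provably lies in $\mathbb{P}^*(\sigma,\psi)$, with $O(1)$ charges per cell --- and it dispatches the $O(\psi^2)$ bound with only the phrase ``standard packing argument.'' You instead prove everything from first principles: storing cells have side $\Theta(|\gamma|)$, so all relevant cells occupy $O(\log\psi)$ levels between $\Theta(|\sigma|/\psi)$ and $\Theta(|\sigma|)$; your per-level counts are perimeter-based, $O(|\sigma|/2^k+1)$, for $\mathcal{C}(\sigma)$ (via the correct observation that maximality forces the parent to cross $\partial\sigma$) and for $\mathcal{P}(\sigma)$ (via the factor-$5$ dilation), and area-based, $O((|C_\sigma|/2^k)^2+1)$, for $\mathcal{P}(\pi(C_\sigma))$, each summing to a geometric series dominated by the finest level. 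The paper's route is shorter and reuses a lemma stated for general $L_p$ disks; your route is self-contained, avoids both the rescaling and the charging step, and --- most valuably --- actually supplies the missing detail for the $O(\psi^2)$ bound, including the sharp observation that $|\pi(C_\sigma)|$ can be unbounded, so the bound must come from a union-plus-area argument over a common $O(|C_\sigma|)$-size box rather than from multiplying the per-square $O(\psi)$ bound by the number of squares.
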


\begin{proof}
Consider any set of squares $S$. 
We rescale the plane such that the diameter of squares in $\S$ lies in $[\frac{1}{\psi}, 1]$.
We apply Lemma~\ref{lem:neighborhood_size} to conclude that $|\mathbb{C}(\sigma, \psi)|, |\mathbb{P}(\sigma, \psi)|\in O(\psi)$.
Note that the smallest storing cell in $T(\S)$ then has size  $\frac{1}{4 \psi}$.
Having rescaled, $\mathcal{C}(\sigma) \subseteq \mathbb{C}^*(\sigma, \phi)$.

Suppose that after rescaling there is a cell $C \in \mathcal{P}(\sigma)$ that is not in $\mathbb{P}^*(\sigma, \psi)$. 
Then $C$, scaled by a factor 5, intersects the boundary of $\sigma$.
Yet if $C \not \in \mathbb{P}^*(\sigma, \psi)$ then $C$ cannot store any square $\rho$ that intersects $\sigma$. 
Denote by $C'$ a neighbor of $C$ of size $2|C|$ that lies closer to the center of $\sigma$. 
It must be that $C' \in \mathbb{P}^*(\sigma, \psi)$ (indeed, we can construct a square with diameter $4|C|$ stored in $C'$ that intersects $\sigma$). 
This way, each cell in $\mathbb{P}^*(\sigma, \psi)$ can get charged by at most $O(1)$ cells $C' \in \mathcal{P}(\sigma)$ where $C' \not \in \mathbb{P}^*(\sigma, \psi)$. 
Thus, $|\mathcal{P}(\sigma)| \in O(\psi)$.
The upper bound on $|\mathcal{P}(\pi(C_\sigma))|$ follows  from the standard packing argument.
\end{proof}

\begin{restatable}{lemma}{uphillpermeter}
\label{lemma:uphill_permeter}
Let $C$ be a storing cell. 
Denote by $Z$ any cell, such that there could exist a $\sigma$ stored in $Z$ with $C \in \mathcal{P}(\sigma)$. We can report all $O(\log \psi)$ such cells in $O(\log \psi \log n)$ time.
\end{restatable}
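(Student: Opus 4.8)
The plan is to reuse the structure of the proof of Lemma~\ref{lemma:uphill_containment}, replacing the containment test ``an ancestor of $C$ lies in $\mathcal{C}(\sigma)$'' by the perimeter test ``$C \in \mathcal{P}(\sigma)$''. First I would unfold the two hypotheses. If $\sigma$ is stored in $Z$ then $Z = C_\sigma$, so $Z \subseteq \sigma$ and, as observed in the proof of Lemma~\ref{lemma:uphill_containment}, $|Z| \le |\sigma| \le 5|Z|$. If moreover $C \in \mathcal{P}(\sigma)$ then, by definition of $\mathcal{P}$, we have $|C| \le |\sigma|$ and the cell $C$ scaled by a factor $5$ about its center intersects $\partial\sigma$.

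Second, I would pin down the levels that $Z$ can occupy. Writing $|C| = 2^\ell$ and $|Z| = 2^j$, the chain $2^\ell = |C| \le |\sigma| \le 5|Z| = 5\cdot 2^j$ gives $2^j \ge 2^\ell/5$, hence $j \ge \ell - 2$. For the upper end I use that $Z$ is a (candidate) storing cell and $C$ is a storing cell: by the definition of the adaptive aspect ratio $\psi$, all storing cells --- and all squares that could be stored in them --- lie in a window of $O(\log\psi)$ consecutive levels (cf.\ the definitions of $\ell_{\min},\ell_{\max}$). Thus $j$ ranges over only $O(\log\psi)$ levels.

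The geometric core is to show that, at each fixed level $j$, only $O(1)$ cells $Z$ qualify. I would argue that any admissible $Z$ lies spatially close to $C$. Since $Z \subseteq \sigma$ and $|\sigma| \le 5|Z| = 5\cdot 2^j$, every point of $\partial\sigma$ is within $O(2^j)$ of $Z$; since $\partial\sigma$ meets the factor-$5$ scaling of $C$, which lies within $O(2^\ell) = O(2^j)$ of $C$ (using $2^\ell \le 5\cdot 2^j$), the cell $Z$ is within distance $O(2^j) = O(|Z|)$ of $C$. A packing argument then bounds the number of level-$j$ cells this close to $C$ by $O(1)$; concretely these are the level-$j$ ancestor $C_j$ of $C$ together with a constant number of its neighbours and their neighbours. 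As in Lemma~\ref{lemma:uphill_containment}, I would retrieve them for each level by a point location at that level, at cost $O(\log n)$ per level. Summing over the $O(\log\psi)$ relevant levels yields $O(\log\psi)$ candidate cells reported in $O(\log\psi\,\log n)$ time.

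The step I expect to be the main obstacle is the spatial bound: because $\mathcal{P}$ uses a factor-$5$ scaling of $C$ rather than plain containment inside $\sigma$, I must track the constants carefully to confirm that the enlarged neighbourhood of $C$ still contains only $O(1)$ cells per level, and that combining ``$\partial\sigma$ meets $5C$'' with ``$Z \subseteq \sigma$ and $|\sigma| \le 5|Z|$'' genuinely forces $Z$ into an $O(|Z|)$-radius ball around $C$. Once that constant is fixed, the level count and the point-location cost follow exactly as in the proof of Lemma~\ref{lemma:uphill_containment}.
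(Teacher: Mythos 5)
Your proposal is correct and takes essentially the same route as the paper: both fix a level $j$, show that any admissible $Z$ must lie within $O(2^j)$ of the level-$j$ ancestor $C_j$ of $C$ (the paper phrases this as requiring the factor-$5$ scalings of $Z$ and $C_j$ to intersect), conclude by packing that only $O(1)$ cells per level qualify, and retrieve them via one point location per level over the $O(\log \psi)$ relevant levels. Your explicit constant tracking (e.g., deriving $j \ge \ell - 2$ from $|C| \le |\sigma| \le 5|Z|$) is if anything slightly more careful than the paper's blunt claim $j \ge \ell$, but the argument is the same.
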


\begin{proof}
Let $|C| = 2^\ell$ (i.e., $C$ is at level $\ell$ in the quadtree). 
By definition, every quadtree cell $Z$ of the lemma statement is stored at a level $j \geq \ell$.
Fix a level $j \geq \ell$ and let $C_j$ be the ancestor of $C$ at level $j$. 
If for any cell $Z$ at level $j$, $C \in \mathcal{P}(\pi(Z) )$ then it must be that the cells $Z$ and $C_j$ (when both are scaled around their center by a factor $5$) intersect. 
There are at most $O(1)$ such cells $Z$ at level $j$ for which this can be true. 
We can find these cells at level $j$ in $O(\log n)$ time by performing $O(1)$ point locations in the quadtree (querying a neighborhood of $25 \times 25$ around $C_j$). 
The fact that all cells in $\mathcal{P}(\S)$ lie in a range of $O(\log \psi)$ levels concludes the proof.
\end{proof}

\section{Maintaining and navigating quadtrees}
\label{sec:mat}

A compressed quadtree $T_X$ that stores a set $X$ of quadtree cells can be dynamically maintained in $O(\log |X|)$ time per insertion and deletion~\cite{har2010quadtrees}. 
Moreover, leaf location queries are supported in $O(\log |X|)$ time, which take as input some point $q \in \mathbb{R}^2$ and output the leaf of $T_X$ that contains $q$.
By Theorem~\ref{thm:quadtree_size}, $|X| = O(n \log \psi)$ in our setting. Since we assume that $\psi \in O(n^c)$, see Section~\ref{sec:results}, we can say that our insertion, deletion and point location operations in the compressed quadtree take $O(\log n)$ time. 
Compressed quadtrees additionally support level locations where for any query point $q \in \mathbb{R}^2$ and level $\ell$, the output is the quadtree cell at level $\ell$ that contains $q$; this can be used to dynamically maintain for all $\sigma \in \S$ the set $\mathcal{N}(\sigma)$ in our quadtree in $O(\log n)$ time per update in $\S$~\cite{buchin2011preprocessing}.

We maintain $L(\S)$ in $O(\log n)$ time per update to $\S$, while supporting point location queries.
We maintain in $O(\log n)$ time per update in $\S$ the values $d_{\max}$ and $d_{\min}$ that denote the maximal and minimal diameter in $\S$ respectively. We apply 3 more data structures:

\subparagraph{Marked Ancestor Trees (MAT).}
Alstrup, Husfeldt, and Rauhe~\cite{alstrup1998marked} introduce marked-ancestor trees.
Let $T$ be a dynamic tree.
Each node in $T$ is either marked or unmarked.
Given a node $v \in T$, the MAT supports changing the mark of $v$ or updating $T$ in $O(\log \log n)$ time.
Additionally, given a node $v \in T$, one can find the lowest/highest marked node on the path from $v$ to the root in $O(\log n / \log \log n)$ time.
 We augment our quadtree with a MAT.

\subparagraph{Orthogonal range trees.} Willard and Lueker~\cite{willard1985adding} show a data structure to store a set of $n$ squares using $O(n \log n)$
space. Given a query rectangle $\rho$, it can report an input square contained in $\rho$ in $O(\log^4 n)$ time, if such a square exists.
Given a query square $\rho$ it can report the number of squares that contain $\rho$ in $O(\log^4 n)$ time.
 We implement the range tree using general balanced
trees~\cite{andersson99gener_balan_trees}, so that we can support
updates in amortized $O(\log^4 n)$ time.

\subparagraph{Segment trees.} 
Segment trees~\cite{bkos2008} store a set of $n$ horizontal segments using $O(n\log n)$ space, so that for a vertical query segment $Q$ we obtain all $k$ input segments
that intersect $Q$ in $O(\log^2 n + k)$ time. The data structure can
again be made dynamic, supporting updates in $O(\log^2 n)$ amortized
time. For any $\sigma$, we store
the horizontal sides of $C_\sigma$ (scaled by a factor 5) in such an orthogonal intersection data
structure. Furthermore, we create a second such a data structure
storing all vertical sides.

\begin{theorem}
    \label{thm:square-maintenance}
        Let $\S$ be a set of axis-aligned squares. We augment $T(\S)$ with an $O(n \log n)$-size data structure
        using $O(n \log n)$ space, supporting inserting/deleting a square $\sigma$ in $O( |\mathcal{C}(\sigma)| \cdot \log^4 n + \log^6 n)$ time, and
    \begin{itemize}
        \item all cells in $\mathcal{C}(\S)$ are marked in our marked-ancestor tree;
                    \item for any query square $\gamma$, we can obtain
          $\mathcal{P}(\gamma)$ in $O(\log^2 n + |\mathcal{P}(\gamma)|)$ time.
          \item for any query cell $C$, we obtain the set $\mathcal{Z}(C) := \{ Z \mid C \in \mathcal{P}(\pi(Z)) \}$ in $O(\log^5 n)$ time. 
    \end{itemize}
\end{theorem}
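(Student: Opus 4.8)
The plan is to realise the augmenting structure as three ingredients layered onto $T(\S)$: (i) the marked-ancestor tree on the quadtree, where each cell carries a reference counter equal to the number of squares $\sigma$ that have this cell in $\mathcal{C}(\sigma)$, and a cell is declared \emph{marked} exactly when its counter is positive; (ii) for every storing cell $Z$, a Willard--Lueker range tree over $\pi(Z)$; and (iii) two global segment trees holding, respectively, the horizontal and the vertical sides of $5C_\sigma$ (the storing cell scaled by $5$ about its centre) over all $\sigma\in\S$. Since $\sum_Z|\pi(Z)| = n$ and there are $O(n)$ scaled sides, ingredients (ii) and (iii) each occupy $O(n\log n)$ space, and the MAT adds $O(1)$ per quadtree node; with $\psi = O(n^c)$ we have $\log\psi = O(\log n)$, so by Theorem~\ref{thm:quadtree_size} the whole augmentation fits in $O(n\log n)$. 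The marking invariant of the first bullet is then immediate from the counter scheme.

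To answer a query $\mathcal{P}(\gamma)$ I would first classify membership: a storing cell $C$ with $|C|\le|\gamma|$ lies in $\mathcal{P}(\gamma)$ iff $5C$ meets $\partial\gamma$, and for axis-aligned boxes this happens in exactly one of two ways. Either $\partial(5C)$ and $\partial\gamma$ cross --- caught by querying the two segment trees with the four sides of $\gamma$ and reporting every stored scaled side that a side of $\gamma$ intersects --- or they do not cross, which forces $\gamma\subset\mathrm{int}(5C)$. The latter needs $|\gamma|/5 \le |C| \le |\gamma|$, i.e. $C$ is within a factor $5$ in size of $\gamma$, so only $O(1)$ such cells exist and they are found by $O(1)$ level locations around the centre of $\gamma$. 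I would take the union, discard cells with $|C|>|\gamma|$ (of which the segment-tree search returns only $O(\log\psi)$, one per level), and report the rest. The searches cost $O(\log^2 n)$ plus output, giving $O(\log^2 n + |\mathcal{P}(\gamma)|)$.

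For $\mathcal{Z}(C)$ I would reverse the roles. By Lemma~\ref{lemma:uphill_permeter} only $O(\log\psi)$ cells $Z$ can possibly store a square $\sigma$ with $C\in\mathcal{P}(\sigma)$, and they are reported in $O(\log\psi\log n)$ time. For each candidate $Z$ it remains to decide whether $\pi(Z)$ actually contains such a square, i.e. some $\sigma$ with $|\sigma|\ge|C|$ whose boundary meets $5C$. Because every $\sigma\in\pi(Z)$ contains $Z$ and has diameter within a constant factor of $|Z|$, I would resolve this with a constant number of orthogonal queries on the range tree of $\pi(Z)$: the squares whose boundary meets $5C$ are exactly those intersecting $5C$ but not containing it, and, restricted to the comparable-size family $\pi(Z)$, both conditions reduce to orthogonal range predicates on the centres (and radii) of the stored squares. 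Each candidate then costs $O(\log^4 n)$, for a total of $O(\log\psi\cdot\log^4 n) = O(\log^5 n)$.

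Finally, on inserting or deleting $\sigma$ I would update $T(\S)$, $L(\S)$, and $d_{\min},d_{\max}$ in $O(\log n)$; insert or delete $\sigma$ in the range tree of $C_\sigma$ and the sides of $5C_\sigma$ in the segment trees in amortised $O(\log^4 n)$; compute $\mathcal{C}(\sigma)$ by traversing $T(\S)$ inside $\sigma$ and adjust the MAT counters of its $O(|\mathcal{C}(\sigma)|)$ cells; and, since the newly created (or removed) storing cell $C_\sigma$ can change $\mathcal{C}(\rho)$ and the perimeter relations of other squares, repair the $O(\log\psi)$ cells flagged by Lemmas~\ref{lemma:uphill_containment} and~\ref{lemma:uphill_permeter}. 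Charging the range-tree work to the $|\mathcal{C}(\sigma)|$ affected cells, and the $O(\log\psi)$ secondary repairs (each re-evaluated in $O(\log^5 n)$) to the additive term, yields $O(|\mathcal{C}(\sigma)|\cdot\log^4 n + \log^6 n)$. I expect the main obstacle to be the per-candidate existence test inside the $\mathcal{Z}(C)$ query --- encoding ``a stored square's boundary meets $5C$'' through only the containment/reporting primitives of the Willard--Lueker trees within $O(\log^4 n)$ --- together with proving that the cascading effect of a new storing cell on other squares' $\mathcal{C}$- and $\mathcal{P}$-sets is confined to the $O(\log\psi)$ cells of Lemmas~\ref{lemma:uphill_containment} and~\ref{lemma:uphill_permeter}, so that the marking invariant is restored within budget.
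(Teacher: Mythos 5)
Your data-structure inventory and the two query procedures essentially coincide with the paper's proof: the MAT with per-cell reference counters, the two segment trees on the factor-$5$-scaled storing-cell sides for $\mathcal{P}(\gamma)$ (where your explicit treatment of the non-crossing case $\gamma\subset\mathrm{int}(5C)$ via $O(1)$ level locations is in fact more careful than the paper, which states the crossing criterion as an ``if and only if''), and the $\mathcal{Z}(C)$ query via the $O(\log\psi)$ candidates of Lemma~\ref{lemma:uphill_permeter} followed by $O(\log^4 n)$ range-tree tests per candidate (the paper instantiates your ``intersects $5C$ but does not contain it'' predicate concretely as a difference of two corner-containment counts, which uses only the counting primitive of the Willard--Lueker structure). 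The genuine gap is in the update procedure: you propose to ``compute $\mathcal{C}(\sigma)$ by traversing $T(\S)$ inside $\sigma$,'' but such a traversal is not output-sensitive. A traversal must descend through every cell of $T(\S)$ that crosses the boundary of $\sigma$, and these cells are not bounded by $|\mathcal{C}(\sigma)|$: place $\Theta(n)$ squares of diameter $1/\psi$ whose storing cells hug $\partial\sigma$ from the outside; then $\Theta(\psi)$ cells of $T(\S)$ straddle $\partial\sigma$ and get visited, while $\mathcal{C}(\sigma)$ may be empty, so your claimed bound $O(|\mathcal{C}(\sigma)|\log^4 n+\log^6 n)$ fails. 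This is exactly why the paper introduces the recursive \emph{containment search} (Figure~\ref{fig:containment-search}): split $\sigma$ into eight rectangles around $C_\sigma$; in each rectangle query a \emph{global} orthogonal range tree over the storing cells to report one storing cell $C_1$ (or certify emptiness), lift $C_1$ in $O(\log\psi)$ steps to its maximal ancestor $C_1'\subseteq\sigma$, split the rectangle along $C_1'$ into nine rectangles and recurse. Every range query either certifies a rectangle empty or discovers a new element of $\mathcal{C}(\sigma)$, giving $O(|\mathcal{C}(\sigma)|\cdot\log^4 n)$. Note that this also needs a global range tree over storing cells, which your inventory (per-cell trees over $\pi(Z)$ only) does not contain.

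The second, smaller gap is the ``repair'' of the $O(\log\psi)$ ancestors of $C_\sigma$: you correctly identify via Lemmas~\ref{lemma:uphill_containment} and~\ref{lemma:uphill_permeter} that only these cells are affected and budget $O(\log\psi)\cdot O(\log^5 n)$ for them, but you never say how a counter is actually re-evaluated, and you flag this yourself as an obstacle. The paper's mechanism is concrete: for each ancestor $C_i$ with parent $C_{i+1}$, and each of the $O(\log\psi)$ candidate cells $Z$ from Lemma~\ref{lemma:uphill_containment}, query the range tree of $\pi(Z)$ for the number of squares containing $C_i$ and the number containing $C_{i+1}$; the difference counts the squares $\gamma\in\pi(Z)$ with $C_i\in\mathcal{C}(\gamma)$, and summing over $Z$ recomputes the counter of $C_i$ from scratch in $O(\log\psi\cdot\log^4 n)$ time, i.e.\ $O(\log^6 n)$ over all ancestors. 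With the containment search and this difference-counting step supplied, your proposal becomes the paper's proof; without them, the stated update bound is not established.
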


\begin{proof}
By Theorem~\ref{thm:quadtree_size}, our quadtree requires $O(n \log \psi)$ space.
Using the standard operations on compressed quadtrees, we can maintain $\mathcal{N}(\S)$ in $O(\log n)$ time per update.
What remains for quadtree maintenance is to identify, decompress and mark all cells in $\mathcal{C}(\S)$. 

\textbf{Maintaining $\mathcal{C}(\S)$.}
Every cell $C \in \mathbb{F}$ has a counter that counts for how many $\sigma \in \S$, $C \in \mathcal{C}(\sigma)$. 
Whenever the counter is zero, we do not store it explicitly.
Otherwise, $C \in \mathcal{C}(\S)$ and we need to make sure that $C$ is decompressed and marked.
For each update of a square $\sigma$ with storing cell $C_\sigma$, there are two types of counter updates:
\begin{enumerate}
    \item updating counters of $C \in \mathcal{C}(\sigma)$, and
\item  updating the counters of $C_\sigma$ and its ancestors.
\end{enumerate}
We start with the first case. Instead of increasing counters, we do something slightly stronger as we can report all of $\mathcal{C}(\sigma)$. 
By definition, $C_\sigma \in \mathcal{C}(\sigma)$ and we add it to our output. 
We split $\sigma$ into eight rectangles that are bounded by $\sigma$ and the boundary of $C_\sigma$, see Figure~\ref{fig:containment-search}. 
We process each rectangle separately.
Consider such a rectangle $R$. 
We query our orthogonal range tree to report a storing cell $C_1$ in $R$ in  $O(\log^4 n)$ time. 
Given $C_1$, we walk in $O(\log \psi)$ time up the quadtree to find its largest ancestor $C_1'$ that is still contained in $\sigma$. 
By definition, $C_1' \in \mathcal{C}(\sigma)$ and we add it to our output.
Subsequently, we partition $R$ into nine rectangles that are bounded by the boundaries of $C_1'$ and recurse.  
For each cell in $\mathcal{C}(\sigma)$ we perform eight orthogonal range queries. 
For each range query, we either conclude that the range contains no cells in $\mathcal{C}(\sigma)$, or we identify at least one cell in $\mathcal{C}(\sigma)$.
As we recurse on rectangles that are bounded by cell boundaries and we explore all of $R$, we find all cells of $\mathcal{C}(\sigma)$ in $O( |\mathcal{C}(\sigma)| \cdot \log^2 n)$ time. As we find them, we may adjust their counters.

Now onto the second case, where we simply recompute all counters from scratch. 
There are at most $O(\log \psi)$ ancestors $C_\sigma, C_1, \ldots C_k$ of $C_\sigma$ that may be contained in a square in $\S$. 
For each of these, we recompute their counters from scratch. 
Fix an ancestor $C_i$ with parent $C_{i+1}$. 
By Lemma~\ref{lemma:uphill_containment}, there are at most $O(\log \psi)$ cells $Z$ such that $Z$ could store a square $\gamma$ with $C_i \in \mathcal{C}(\gamma)$. We obtain  all such $Z$ in $O(\log \psi \log n)$ time and iterate over each of them. 
For a fixed $Z$, we use the range tree to count how many $\gamma \in \pi(Z)$ contain $C_i$ in $O(\log^4 n)$ time. 
We then count how many $\gamma \in \pi(Z)$ contain $C_{i+1}$. 
The difference between these counts is the number of squares $\gamma^* \in \pi(Z)$ for which $C_i \in \mathcal{C}(\gamma^*)$. 
We compute and sum all these numbers to recompute the count of $C_i$. 
It follows from the fact that $O(\log \psi) \subset O(\log n)$ that we can maintain $\mathcal{C}(\S)$ in $O( |\mathcal{C}(\sigma)| \cdot \log^2 n + \log^6 n)$ time.

\textbf{Querying for $\mathcal{P}(\gamma)$.}
We show how to obtain for any query squares $\gamma$ the set
$\mathcal{P}(\gamma)$ in $O(\log^2 n + |\mathcal{P}(\gamma)|)$ time.
For each storing cell $C$, we consider the cell $C^*$ that is $C$ scaled by a factor 5 around its center. 
We store each of the boundary segments of $C^*$ in our Segment Tree. 
There is exactly one storing cell per square in $\S$, so maintaining the
Segment Tree intersection data structure takes $O(\log^2 n)$ time per
update and uses $O(n\log n)$ space. 
For any query square $\gamma$, we have $C \in \mathcal{P}(\gamma)$ if and only if one of the boundary segments of $\gamma$ intersects one of the boundary segments of $C^*$.
Thus, we can immediately use the intersection data structure to
compute $\mathcal{P}(\gamma)$ in $O(\log^2 n + |\mathcal{P}(\gamma)|)$
time, as each cell in $\mathcal{P}(\gamma)$ is reported at most a
constant number of times. 

\textbf{Querying for $\mathcal{Z}(C)$.}
Denote by $Z$ any cell, such that there could exist a $\gamma$ stored in $Z$ with $C \in \mathcal{P}(\gamma)$. By Lemma~\ref{lemma:uphill_permeter}, we can report all at $O(\log \psi)$ such cells in $O(\log \psi \log n)$ time.
For every such $Z$, we conceptually rotate the plane such that $Z$ lies above $C$.
Let $C^*$ be the cell $C$ increased by a factor 5 around its center. 
Any square $\rho$ in $\pi(Z)$ intersects $C^*$ in its boundary if and only if one of two conditions hold:
$\rho$ contains the top left endpoint of $C^*$ but not the bottom left endpoint, or
$\rho$ contains the top right endpoint of $C^*$ but not the bottom right endpoint. 
We select the top right endpoint of $C^*$ and we count how many squares in $\pi(Z)$ contain  the top right endpoint in $O(\log^4 n)$ time. 
We do the same for the bottom right endpoint. 
If the counts differ, there is at least one square in $\pi(Z)$ that intersects $C^*$ in its boundary and thus $C \in \mathcal{P}(\pi(Z))$.
Doing this for all $O(\log \psi)$ levels takes $O(\log^5 n)$ time.
\end{proof}

\begin{figure}[h]
  \centering
  \includegraphics[]{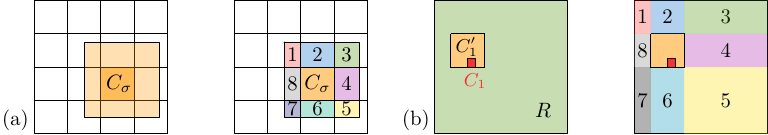}
  \caption{
      (a) Given a storing cell $C_\sigma$, we partition $\sigma$ into nine rectangles (one being $C_\sigma$). 
    (b) For each rectangle $R$, we do a range query to find a storing cell $C_1$ (if it exists). For the largest ancestor $C_1' \subset \sigma$ of $C_1$, we partition $R$ into nine rectangles once again and recurse.
  }
  \label{fig:containment-search}
\end{figure}

\section{Specific square intersection data structures}
\label{sec:Envelope_data_structures}

In this section we develop a solution for the following data structure problem: Let $\RR$ be a set of $m$ squares.  Let $R_1, \ldots, R_k$, be $k$ subsets of \RR that we refer to as \emph{conflict sets} and let $\ell \leq k$ be the maximum number of conflict sets that any square from \RR appears in. We want to store $\RR$ and all the conflict sets $R_1, \dots, R_k$ in a data structure that has size near linear in $m$ and $z = \sum_i |R_i|$, and support the following operations in the following time:

\begin{description}
\item[Insert$(\rho)$, ($O(l \cdot \log^3 m)$ time):] Insert a square $\rho$ into $\RR$.
\item[Delete$(\rho)$, ($O(l \cdot \log^3 m)$ time):] Delete a square $\rho$ from $\RR$ and every $R_i$ that it occurs in.
\item[Insert$(\rho, R_i)$, ($O( \log^3 m)$ time):] Insert a square $\rho$ in the conflict set $R_i$. If $R_i = \emptyset$, create a new conflict set.
\item[Delete$(\rho, R_i)$, ($O( \log^3 m)$ time):] Delete a square $\rho$ from the conflict set $R_i$. 
\item[Query$(\sigma, R_i, C)$, ($O( \log^3 m)$ time):] Given a query $\sigma$ whose center lies below all centers of all squares in \RR, and a horizontal line segment $C$ below $\sigma$, return (if it exists) a square $\rho \in \RR$ that
  intersects $\sigma$, but is not in the conflict set $R_i$, and that does not contain $C$.
\end{description}

In Section~\ref{app:squareintersection} we solve this problem as follows:
we map every square
$\rho = [\ell_\rho,r_\rho] \times [b_\rho,t_\rho] \in \RR$ to a point
$p_\rho = (b_\rho,\ell_\rho,r_\rho)$ in $\R^3$, and store these points
in a 3D-range tree $T$ augmented for range counting
queries~\cite{bkos2008}. Hence, every third-level subtree $T_\nu$
stores the number of points $m_\nu$ in $T_\nu$. Furthermore, for each
such subtree, and each conflict set $R_i$, consider the subset of
points stored in the leaves of $T_\nu$ for which the corresponding
square appears in $R_i$. If this set is non-empty then node $\nu$ also stores the
size
$m_{\nu,i} = |\{ p_\rho \mid p_\rho \in T_\nu \land \rho \in R_i \}|$
of this set. Furthermore, we maintain a bipartite graph between the squares in $\RR$ and the conflict sets $R_i$, so that given a
square $\rho \in \RR$ we can find the $\ell$ conflict sets it appears in
in $O(\ell)$ time. We implement all trees using general balanced
trees~\cite{andersson99gener_balan_trees} so that we can perform
updates efficiently. A 3D-range tree uses $O(m\log^2 m)$ space. Each
square in the multiset $\bigcup_i R_i$ contributes to $O(\log^3 m)$
nodes of $T$, and hence the entire structure uses at most
$O( (m + z) \log^3 m)$ space (Lemma~\ref{lem:lower_envelope_L1}). 
In Section~\ref{sub:maintaining_the_matching} we use this structure for connectivity queries.

\section{Square intersection data structure}
\label{app:squareintersection}

Let $\RR$ be a set of $m$ squares. Let $R_1, \ldots, R_k$, be $k$ subsets of \RR that we refer to as \emph{conflict sets} and let $\ell \leq k$ be the maximum number of conflict sets that any square from \RR appears in. We want to store $\RR$ and all the conflict sets $R_1, \dots, R_k$ in a data structure that has size near linear in $m$ and $z = \sum_i |R_i|$, and support the following operations:

\begin{description}
\item[Insert$(\rho)$:] Insert a square $\rho$ into $\RR$.
\item[Delete$(\rho)$:] Delete a square $\rho$ from $\RR$ and every $R_i$ that it occurs in.
\item[Insert$(\rho, R_i)$:] Insert a square $\rho$ in the conflict set $R_i$. If $R_i = \emptyset$, create a new conflict set.
\item[Delete$(\rho, R_i)$:] Delete a square $\rho$ from the conflict set $R_i$. If $R_i$ becomes empty, delete $R_i$.
\item[Query$(\sigma, R_i, C)$:] Given a query square $\sigma$ whose center lies below all centers of all squares in \RR, and a horizontal line segment $C$ below $\sigma$, return (if it exists) a square $\rho \in \RR$ that
  intersects $\sigma$, but is not in the conflict set $R_i$, and that
  also does not contain $C$.
\end{description}

We map every square
$\rho = [\ell_\rho,r_\rho] \times [b_\rho,t_\rho] \in \RR$ to a point
$p_\rho = (b_\rho,\ell_\rho,r_\rho)$ in $\R^3$, and store these points
in a 3D-range tree $T$ augmented for range counting
queries~\cite{bkos2008}. Hence, every third-level subtree $T_\nu$
stores the number of points $m_\nu$ in $T_\nu$. Furthermore, for each
such subtree, and each conflict set $R_i$, consider the subset of
points stored in the leaves of $T_\nu$ for which the corresponding
square appears in $R_i$. If this set is non-empty then node $\nu$ also stores the
size
$m_{\nu,i} = |\{ p_\rho \mid p_\rho \in T_\nu \land \rho \in R_i \}|$
of this set. Furthermore, we maintain a bipartite graph between the squares in $\RR$ and the conflict sets $R_i$, so that given a
square $\rho \in \RR$ we can find the $\ell$ conflict sets it appears in
in $O(\ell)$ time. We implement all trees using general balanced
trees~\cite{andersson99gener_balan_trees} so that we can perform
updates efficiently. A 3D-range tree uses $O(m\log^2 m)$ space. Each
square in the multiset $\bigcup_i R_i$ contributes to $O(\log^3 m)$
nodes of $T$, and hence the entire structure uses at most
$O( (m + z) \log^3 m)$ space. 
We show how to answer our queries and how to update the data structure:

\begin{restatable}{lemma}{lowerLone}
  \label{lem:lower_envelope_L1}
  Let \RR be a set of $m$ squares, let
  $z = \sum |R_i|$, and let there be at most $k \leq m$ conflict
  sets. Let each $\rho \in \RR$ appear in at most $l$ conflict
  sets. There is a data structure $\D^*(\RR)$ of size $O( (m + z) \log^3 m)$ that supports:
\begin{description}
\item[Insert$(\rho)$] in $O(l \cdot \log^3 m)$ amortized deterministic time,
\item[Delete$(\rho)$] in  $O(l \cdot \log^3 m)$ amortized deterministic  time,
\item[Insert$(\rho, R_i)$] in $O(\log^3 m)$ amortized deterministic  time, 
\item[Delete$(\rho, R_i)$] in $O(\log^3 m)$ amortized deterministic  time, and
\item[Query$(\sigma, R_i, C)$] in $O(\log^3 m)$ amortized deterministic  time.
\end{description}
\end{restatable}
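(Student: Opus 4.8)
The plan is to implement all five operations on top of the augmented 3D-range tree $T$ already described, whose points are $p_\rho=(b_\rho,\ell_\rho,r_\rho)$ and whose third-level nodes $\nu$ carry the total count $m_\nu$ together with the per-conflict-set counts $m_{\nu,i}$, plus the bipartite incidence graph between squares and conflict sets. The space bound $O((m+z)\log^3 m)$ is the one already computed in the text. The two conflict-set operations are immediate: $\mathbf{Insert}(\rho,R_i)$ and $\mathbf{Delete}(\rho,R_i)$ only adjust the $O(\log^3 m)$ counters $m_{\nu,i}$ along the three search paths of $p_\rho$ in $T$ and flip one edge of the incidence graph, for $O(\log^3 m)$ time each. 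The point operations $\mathbf{Insert}(\rho)$ and $\mathbf{Delete}(\rho)$ use partial rebuilding: implementing every level as a general balanced tree~\cite{andersson99gener_balan_trees}, an out-of-balance node triggers a rebuild of its entire associated subtree, amortizing to $O(\log^3 m)$ tree work; re-establishing the $m_{\nu,i}$ contributions of the $\le\ell$ conflict-set memberships of each rebuilt square (and, for deletion, first removing $\rho$ from each of its $\le\ell$ conflict sets via $\mathbf{Delete}(\rho,R_i)$) contributes the extra factor $\ell$, giving $O(\ell\log^3 m)$ amortized.

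The heart of the lemma is $\mathbf{Query}(\sigma,R_i,C)$. First I would use the promise that the center of $\sigma$ lies below the centers of all squares in $\RR$ to turn ``$\rho$ intersects $\sigma$'' into a 3D dominance query. In general $\rho\cap\sigma\neq\emptyset$ requires $\ell_\rho\le r_\sigma$, $r_\rho\ge\ell_\sigma$, $b_\rho\le t_\sigma$ and $t_\rho\ge b_\sigma$. The center inequality reads $b_\rho+t_\rho\ge b_\sigma+t_\sigma$, so whenever $b_\rho\le t_\sigma$ we get $t_\rho\ge(b_\sigma+t_\sigma)-b_\rho\ge b_\sigma$, i.e. the fourth constraint is implied by the third. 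Hence, over $\RR$, intersection with $\sigma$ is equivalent to membership in the box $B_1=\{b_\rho\le t_\sigma,\ \ell_\rho\le r_\sigma,\ r_\rho\ge\ell_\sigma\}$, which is exactly a canonical range query in $T$ on the coordinates $(b_\rho,\ell_\rho,r_\rho)$.

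Next I would treat ``$\rho$ contains $C$'' the same way. Writing $C=[x_1,x_2]\times\{y_C\}$ with $y_C$ below $\sigma$, convexity gives $C\subseteq\rho$ iff $\rho$ contains both endpoints of $C$, i.e. $\ell_\rho\le x_1$, $r_\rho\ge x_2$, $b_\rho\le y_C$ and $t_\rho\ge y_C$. For $\rho\in B_1$ we already have $t_\rho\ge b_\sigma\ge y_C$, so the last inequality is again automatic; thus among squares of $B_1$ the ones containing $C$ form the sub-box $B_2=B_1\cap\{b_\rho\le y_C,\ \ell_\rho\le x_1,\ r_\rho\ge x_2\}\subseteq B_1$, each of whose three bounding faces lies (weakly) on the same side as the corresponding face of $B_1$. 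The squares we seek are therefore exactly those in $(B_1\setminus B_2)\setminus R_i$. Since ``does not contain $C$'' is the negation of a conjunction, $B_1\setminus B_2$ is not a box, and I resolve this by classifying each point of $B_1\setminus B_2$ by the \emph{first} of the three $B_2$-constraints it violates, splitting $B_1\setminus B_2$ into three pairwise-disjoint boxes (opening the $b$-face, the $\ell$-face, and the $r$-face respectively) whose union is $B_1\setminus B_2$, each a canonical query in $T$. For any box, the number of its squares not in $R_i$ equals $\sum_\nu (m_\nu-m_{\nu,i})$ over the $O(\log^3 m)$ canonical third-level nodes $\nu$ covering it; if this is positive for some box then some $\nu$ has $m_\nu>m_{\nu,i}$, and descending $T_\nu$ to the child whose $(m-m_{\cdot,i})$ difference is positive reaches in $O(\log m)$ steps a leaf square that intersects $\sigma$, avoids $C$, and is not in $R_i$, which I output; if all three differences vanish, no such square exists. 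The query thus runs in $O(\log^3 m)$.

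The step that most needs care, and the main obstacle, is the geometric part rather than the bookkeeping: verifying that the two ``automatic inequality'' reductions ($t_\rho\ge b_\sigma$ inside $B_1$, and $t_\rho\ge y_C$) genuinely follow from the center-below and $C$-below promises, and that the three-box partition of $B_1\setminus B_2$ is both disjoint and exhaustive. Once these are established, everything else reduces to standard augmented range counting and partial rebuilding in $T$, which deliver the stated amortized bounds.
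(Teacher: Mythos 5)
Your proposal is correct and takes essentially the same approach as the paper's proof: the same 3D-range tree on the points $p_\rho=(b_\rho,\ell_\rho,r_\rho)$ with augmented counts $m_\nu$ and $m_{\nu,i}$ maintained via general balanced trees and partial rebuilding (with the factor $l$ from re-establishing conflict-set counts), and the same query reduction to a dominance box $Q$ minus a sub-box $Q'$, answered by descending into a canonical subtree with $m_\nu > m_{\nu,i}$. Your explicit checks of the two implied inequalities and the three-box disjoint decomposition of $B_1\setminus B_2$ merely spell out what the paper compresses into ``the union of $O(1)$ orthogonal ranges.''
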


\begin{proof}
  To insert a square $\rho \in \RR$ we use the standard insertion
  procedure for (dynamic) 3D-range trees; we insert the point $p_\rho$
  into $O(\log^3 m)$ subtrees. If, one of our subtrees becomes too
  unbalanced, we rebuild it from scratch. Rebuilding a $d$D-range tree
  on a set $P$ of $n$ points can be done in $O(n\log^{d-1} n)$
  time. However, we also still have to update the $m_{\nu,i}$ counts for
  each ternary subtree $T_\nu$ and each conflict list. We can do this
  in $O(ln\log^d n)$ time as follows. For each point $p_\rho \in P$ we
  obtain the at most $l$ conflict sets $R_i$ it appears in, and for
  each leaf corresponding to $p_\rho$ we simply walk upward updating
  the $m_{\nu,i}$ counts appropriately. It follows that the amortized
  insertion time is $O(l \log^3 m)$. Deletions are handled similarly
  in $O(l \log^3 m)$ amortized time.

  To insert or delete a square $\rho$ in one of the conflict sets
  $R_i$ we update the $m_{\nu,i}$ counts in the $O(\log^3 m)$ affected
  nodes (and we insert or delete the appropriate edge in the bipartite
  graph). If one of the $m_{\nu,i}$ counts reaches zero after a deletion, we stop storing it. Any counts that we do not store explicitly are considered to be zero.

  \begin{figure}[tb]
    \centering
    \includegraphics{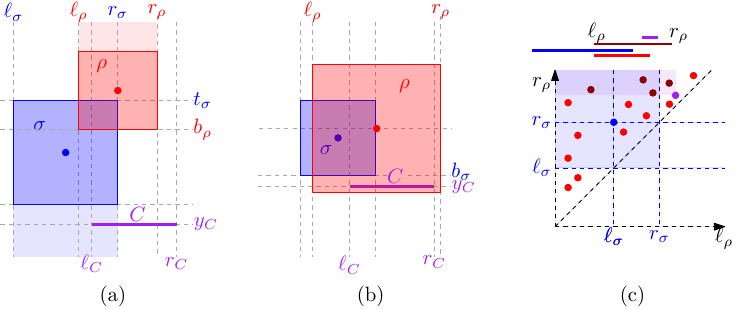}
    \caption{(a) Since the center of $\rho$ lies above the center of
      $\sigma$, we can essentially treat $\rho$ as a rectangle
      unbounded from the top, and $\sigma$ as a rectangle unbounded
      from the bottom. (b) A square $\rho$ may intersect $\sigma$ but
      is not allowed to contain $C$. (c) The $x$-extents of the
      objects map to points in $\R^2$. Squares (whose $x$-extent)
      intersects (the $x$-extent) of $\sigma$ lie in the blue region,
      and are not allowed to lie in the purple region.  }
    \label{fig:linfty_rangetree}
  \end{figure}

  Consider a query with square
  $\sigma = [\ell_\sigma,r_\sigma] \times [b_\sigma,t_\sigma]$,
  horizontal segment $C = [\ell_C,r_C] \times \{y_C\}$, and conflict
  set $R_i$ (see also Figure~\ref{fig:linfty_rangetree}). We will argue that there
  are $O(1)$ axis parallel boxes $Q_1,..,Q_{O(1)}$ such that the
  subset of squares from $\RR$ that intersect $\sigma$ but do not
  contain $C$ is the subset of points that lies in $\bigcup_j Q_j$. Our
  range tree allows us to obtain $O(\log^3 m)$ ternary subtrees
  $T_\nu$ that together represent the points in this region (in
  $O(\log^3 m)$ time). For each such subtree we then consider the
  counts $m_\nu$ and $m_{\nu,i}$: if they are equal all points
  (squares) in $T_\nu$ also appear in $R_i$, and hence there are no
  candidate points (squares) to be found in $T_\nu$. Otherwise, we
  have $m_\nu > m_{\nu,i}$, and hence $T_\nu$ does contain a
  point $p_\rho$ for which $\rho$ intersects $\sigma$, does not
  contain $C$, and for which $\rho \not\in R_i$. Moreover, one of the
  two children of $\nu$, say node $\mu$, must then also have
  $m_\mu > m_{\mu,i}$. This way we can find $p_\rho$ in time
  proportional to the height of $T_\nu$. It follows that the total query time
  is $O(\log^3 m)$. All that remains is to describe the regions
  $Q_1,..,Q_{O(1)}$.

  Since the center of $\sigma$ is guaranteed to lie below all centers
  of squares in $\RR$, and $y_C \leq b_\sigma$, we can essentially
  treat all squares as three-sided rectangles. In particular, a square
   $\rho \in \RR$ intersects $\sigma$ if and only if
  $p_\rho = (b_\rho,\ell_\rho,r_\rho)$ lies in the query range
  $Q = (-\infty,t_\sigma] \times (-\infty,r_\sigma] \times
  [\ell_\sigma, \infty)$ (see Figure~\ref{fig:linfty_rangetree}). Using that
  $y_c \leq b_\sigma \leq (b_\rho+t_\rho)/2$, we find that
  $C \subset \rho$ if and only if $p_\rho$ lies in the range
  $Q' = (\infty,y_C] \times (-\infty,\ell_C] \times [r_C,
  \infty)$. Hence, $\rho$ intersects $\sigma$, but does not contain
  $C$ if and only if $p_\rho \in Q \setminus Q'$. Since both $Q$ and
  $Q'$ are orthogonal boxes this region can be expressed as the union
  of $O(1)$ orthogonal ranges.
\end{proof}

\section{Maximal Bichromatic Matchings}
\label{sub:maintaining_the_matching}

The Maximal Bichromatic Matching data structure (MBM) in~\cite{kaplan2022dynamic} relies upon a square intersection data structure $\D$.
Consider a pair of disjoint quadtree cells $C_1, C_2$ and two sets $R \subseteq \pi(C_1)$ and $B \subseteq \pi(C_2)$. 
The MBM stores two square intersection data structures: $\D(R)$ and $\D(B)$, plus a maximal bichromatic matching $M_{RB}$ of the graph $G[R \cup B]$.

Given two such cells $C_1, C_2$, they dynamically maintain the matching as follows:
For all edges in $M_{RB}$, dynamically remove the endpoints from the square intersection structures storing $\D(R \backslash M_{RB})$ and $\D(B \backslash M_{RB})$. 
When a new square $\sigma$ gets inserted into $B$, query $\D(R \backslash M_{RB})$ to find a square in $R \backslash M_{RB}$ that intersects $\sigma$. If such a square $\rho$ exists, add the edge $(\rho, \sigma)$ to the matching $M_{RB}$. Subsequently delete $\rho$ from $\D(R \backslash M_{RB})$.
With a similar procedure for deletions, one can dynamically maintain $M_{RB}$ in time proportional to the update and query time of the intersection data structure $\D$. 

\begin{figure}[b]
  \centering
  \includegraphics[width=\linewidth]{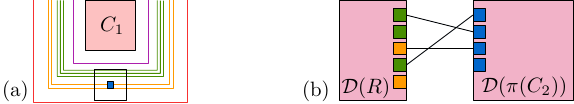}
  \caption{
(a) $C_1$ with a blue $C_2 \in \mathbb{P}^*(\pi(C_1))$. 
    The elements of the set $R = \{ \sigma \in \pi(C_1) \mid C_2 \in \mathbb{P}^*(\sigma) \}$
    are the green and yellow squares. 
(b) 
If we split $C_2$ to create a cell $C'$, then the corresponding $R'$
would consist only of the orange squares. 
Since there exists no efficient way to split intersection data structures, constructing the new data structure on $R'$ takes linear time. 
\vspace{-0.3cm}
  }
  \label{fig:mbm}
\end{figure}

\subparagraph{Defining the sets $R$ and $B$.}
The sets $R$ and $B$ must be carefully chosen if we want to avoid spending quadratic space in $n$ or $\psi$. 
Recall the pipeline of Kaplan \etal~\cite{kaplan2022dynamic}. 
For each storing cell $C_1$, for each $C_2 \in \mathbb{P}^*( \pi(C_1))$, they store an MBM between the pair $(C_1, C_2)$.
We know that there may be $\Theta(\psi^2)$ cells in the set  $\mathbb{P}^*( \pi(C_1))$. 
Suppose that for each pair $C_1, C_2$ they set $R \gets \pi(C_1)$ and $B \gets \pi(C_2)$. 
Every square in $\pi(C_1)$ may get stored $O(\psi^2)$ times and the total space usage is $O(\psi^2 n)$. 
To improve space and time usage, the authors of~\cite{kaplan2022dynamic} instead set $R \gets \{ \sigma \in \pi(C_1) \mid C_2 \in \mathbb{P}^*(\sigma) \}$ and $B \gets  \pi(C_2)$, see Figure~\ref{fig:mbm}(a).
Since each square $\sigma \in \pi(C_1)$ has $O(\psi)$ cells in its perimeter $\mathbb{P}^*(\sigma)$, each square $\sigma$ is stored $O(\psi)$ times and the total space is $O(\psi \cdot |\pi(C_1)|)$.
A charging argument then shows that the total space required is $O(\psi n)$. 
The data structures can be updated in $O(\psi)$ times: the update time of the intersection data structures $\D(R \backslash M_{RB})$ and $\D(B \backslash M_{RB})$. 

\subparagraph{Defining an MBM for adaptive $\psi$.}
When the aspect ratio is adaptive (or, even when it is bounded), the approach by Kaplan~\etal~\cite{kaplan2022dynamic} requires an update time linear in $\psi n$, since, after increasing $\psi$, the perimeter $\mathbb{P}^*(\pi(\sigma))$ increases in size by $O(\psi)$ for every storing cell $C_1$.
We could try to avoid this issue by replacing their definition of perimeter with ours. 
That is, for every cell $C_1$, we would consider the $O(\psi^2)$ cells $C_2$ in $\mathcal{P}(\pi(C_1))$.
For the pair $C_1, C_2$, we want to maintain an MBM between sets $R \gets \{ \sigma \in \pi(C_1) \mid C_2 \in \mathcal{P}(\sigma) \}$ and $B \gets  \pi(C_2)$; by storing $R$ and $B$ each in their separate data structure for square intersection queries. 
However, such a structure can also not be efficiently dynamically maintained, see Figure~\ref{fig:mbm}(b).
Thus, we must avoid storing the sets $R\backslash M_{RB}$ and $B \backslash M_{RB}$ explicitly in a data structure.

\subparagraph{Applying our data structure problem}
We now apply our previous data structure problem.
Let $C_1$ be a storing cell in our quadtree $T(\S)$.
We maintain the data structure $\D^*(\pi(C_1))$ (i.e. we set $\RR \gets \pi(C_1)$).
Let there be $k$ cells in the perimeter $\mathcal{P}(\pi(C_1))$. Then
by Lemma~\ref{lemma:perimeter_size}, we have $k \in O(\psi^2)$.
Let $C_i$ be some storing cell in $\mathcal{P}(\pi(C_1))$, we denote by $M_{i}$ some maximal matching in the graph $G[ R \cup B]$ for $R \gets \{ \gamma \in \pi(C_1) \mid C_i \in \mathcal{P}(\gamma) \} $ and $B \gets \pi(C_i)$.
Denote by $R_i$ the squares in $\pi(C_1)$ that are part of $M_i$; we say that $R_i$ is a conflict set.
The result of this transformation are $k$ conflict sets of $\pi(C_1)$.
Moreover, each square $\rho$ in $\pi(C_1)$ may appear in at most $l \in O(|\mathcal{P}(\rho)|) \subset O(\psi)$ conflict sets.
We now apply Lemma~\ref{lem:lower_envelope_L1} four times (one for each direction).
We maintain for all $C_1$ this data structure $\D^*( \pi(C_1) )$ and show:

\begin{theorem}
\label{thm:mbm_L1}
     Let $\S$ be a set of $n$ squares with adaptive aspect ratio $\psi$.
   We can maintain for pair of storing cells $(C_1, C_2)$ a maximal bichromatic matching in $G[ R \cup \pi(C_2)]$ with $R \gets  \{ \gamma \in \pi(C_1) \mid C_2 \in \mathcal{P}(\gamma) \}$.
Our solution uses $O(n \log \psi \log^3 n)$ space. Inserting/deleting a square $\sigma$ requires $O( |\mathcal{P}(\sigma)| \cdot \log^3 n + \log^5 n)$ amortized  time.
\end{theorem}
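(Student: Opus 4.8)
The plan is to assemble Theorem~\ref{thm:mbm_L1} by combining the matching-maintenance scheme described just above with the data structure of Lemma~\ref{lem:lower_envelope_L1}, applied once per storing cell $C_1$. The key conceptual move, already set up in the discussion, is the reduction that turns the per-pair matchings $M_i$ into a \emph{single} data structure $\D^*(\pi(C_1))$ with one conflict set $R_i$ per perimeter cell $C_i \in \mathcal{P}(\pi(C_1))$. So the first step is to verify that the reduction satisfies the hypotheses of Lemma~\ref{lem:lower_envelope_L1}: here $\RR \gets \pi(C_1)$ has $m = |\pi(C_1)|$ squares, the number of conflict sets is $k = |\mathcal{P}(\pi(C_1))| \in O(\psi^2)$ by Lemma~\ref{lemma:perimeter_size}, and, crucially, each square $\rho \in \pi(C_1)$ lies in a conflict set $R_i$ only if $C_i \in \mathcal{P}(\rho)$, so the multiplicity bound is $l \in O(|\mathcal{P}(\rho)|) \subset O(\psi)$, again by Lemma~\ref{lemma:perimeter_size}. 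I would invoke Lemma~\ref{lem:lower_envelope_L1} four times, once per axis-aligned direction, so that the \textsf{Query} operation (which assumes the query center lies below all of $\RR$) can locate an intersecting, unmatched, non-$C$-containing square regardless of the relative position of $C_1$ and $C_2$; the four copies cost only a constant factor.

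\emph{Next} I would justify the space bound. Each copy of $\D^*(\pi(C_1))$ has size $O((m + z)\log^3 m)$ where $z = \sum_i |R_i| = \sum_{\rho \in \pi(C_1)} (\text{number of conflict sets containing }\rho) \le \sum_{\rho} |\mathcal{P}(\rho)|$. Summing over all storing cells $C_1$, the $m$-term contributes $\sum_{C_1} |\pi(C_1)| = n$, while the $z$-term contributes $\sum_{\rho \in \S} |\mathcal{P}(\rho)|$. I would bound this last sum by $O(n \log \psi)$ using the same charging argument that underlies Theorem~\ref{thm:quadtree_size}: rather than charge each $\rho$ the $O(\psi)$ cells in $\mathcal{P}(\rho)$ directly, I charge each perimeter incidence to a quadtree cell and apply Lemma~\ref{lemma:uphill_permeter}, which says each storing cell $C$ is the target of the perimeter relation for only $O(\log \psi)$ other cells. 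Multiplying by the $\log^3 n$ overhead of the 3D-range tree yields the claimed $O(n \log \psi \log^3 n)$.

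\emph{Finally} I would handle the update time for inserting or deleting a square $\sigma$ into its storing cell $C_\sigma$. Maintaining the matchings is the per-pair scheme sketched above: removing matched endpoints from the searchable set is realized implicitly through the conflict-set counts $m_{\nu,i}$, so an update of $\sigma$ must (i) insert or delete $\sigma$ in $\D^*(\pi(C_\sigma))$, costing $O(l \cdot \log^3 n) = O(|\mathcal{P}(\sigma)| \cdot \log^3 n)$ by Lemma~\ref{lem:lower_envelope_L1}, and (ii) repair each of the $O(|\mathcal{P}(\sigma)|)$ matchings that $\sigma$ participates in — for each such perimeter cell, one \textsf{Query} plus an edge update, each $O(\log^3 n)$. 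The extra additive $\log^5 n$ absorbs the cost of first computing $\mathcal{P}(\sigma)$ and identifying which neighboring cells $Z$ have $C_\sigma$ in their perimeter (so that $\sigma$'s insertion may create new matching edges in \emph{their} structures), via the $\mathcal{P}(\gamma)$ and $\mathcal{Z}(C)$ queries of Theorem~\ref{thm:square-maintenance}, together with the $O(\log \psi)$ levels of Lemma~\ref{lemma:uphill_permeter}. This gives the stated $O(|\mathcal{P}(\sigma)| \cdot \log^3 n + \log^5 n)$ amortized bound.

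The step I expect to be the main obstacle is the \emph{correctness} of the implicit matching maintenance, \emph{not} the accounting. In the original scheme one physically deletes matched squares from $\D(R \setminus M_{RB})$; here the matched squares for $k$ different cells coexist in a single range tree, and "unmatched for $R_i$" is encoded only through the count comparison $m_\nu > m_{\nu,i}$. I must argue that the \textsf{Query} operation, which finds a square intersecting $\sigma$ that is \emph{not} in the conflict set $R_i$, is exactly what the matching-augmentation step needs, and that the geometric side condition — the horizontal segment $C$ that a returned square must not contain, encoding the "must still be assignable within the correct perimeter relationship" constraint — correctly captures when adding the edge preserves a valid maximal bichromatic matching. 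Threading this geometric predicate through all four directional copies, and confirming that maximality (no augmenting edge is missed) is preserved after each update, is where the real care lies.
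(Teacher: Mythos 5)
Your overall architecture tracks the paper's proof closely: the same reduction of the per-pair matchings to a single structure $\D^*(\pi(C_1))$ per storing cell with one conflict set per cell of $\mathcal{P}(\pi(C_1))$, the same parameters $k \in O(\psi^2)$ and $l \in O(|\mathcal{P}(\rho)|)$, four directional copies of Lemma~\ref{lem:lower_envelope_L1}, and the same update accounting (update $\D^*(\pi(C_\sigma))$ in $O(|\mathcal{P}(\sigma)|\log^3 n)$, locate affected pairs via the $\mathcal{P}(\gamma)$ and $\mathcal{Z}(C)$ queries of Theorem~\ref{thm:square-maintenance} inside the additive $O(\log^5 n)$, one Query per affected pair to rematch the orphaned endpoint). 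Your closing concern about correctness of the implicit maintenance is also exactly where the paper spends its care: it argues that a square of $\pi(C_\sigma)$ intersecting $\rho$ but containing the facet $\underline{K}$ must contain all of $K$, hence does not have the relevant cell in its perimeter and legitimately lies outside $R$, so Query$(\rho, R_2, \underline{K})$ succeeds if and only if $\rho$ can be rematched.

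There is, however, a genuine gap in your space analysis. You bound $z = \sum_i |R_i|$ by the total number of perimeter incidences $\sum_{\rho \in \S} |\mathcal{P}(\rho)|$ and claim this sum is $O(n \log \psi)$ by Lemma~\ref{lemma:uphill_permeter}. The lemma does not give that: it bounds by $O(\log \psi)$ the number of \emph{cells} $Z$ that could store a square having a fixed cell $C$ in its perimeter, not the number of \emph{squares}; a single such $Z$ may store $\Theta(n)$ squares, all with $C \in \mathcal{P}(\cdot)$. Concretely, take $n/2$ near-coincident unit squares stored in one cell $Z$, and $n/2$ squares of side $\approx 1/\psi$ with $\psi \approx n$ placed in distinct tiny storing cells along the common boundary: every unit square $\rho$ then has $|\mathcal{P}(\rho)| = \Theta(n)$, so $\sum_\rho |\mathcal{P}(\rho)| = \Theta(n^2)$, far above $n \log \psi$, and your route would only yield $O(n\psi\log^3 n)$ space rather than the claimed bound. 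Your inequality $|R_i| \le |\{\gamma \in \pi(C_1) \mid C_i \in \mathcal{P}(\gamma)\}|$ discards precisely the property that rescues the bound: $R_i$ consists only of \emph{matched} squares, so also $|R_i| \le |\pi(C_i)|$ (in the example each tiny cell contributes a conflict set of size at most $1$). The paper's proof exploits this by fixing, for every pair, a maximal matching in the full graph $G[\pi(C_1)\cup\pi(C_2)]$ and charging each matched edge to its endpoint in the smaller storing cell: per storing cell $C_1$ there are $O(1)$ same-size partner cells and, by Lemma~\ref{lemma:uphill_permeter}, $O(\log \psi)$ strictly larger ones, each matching contributing at most $|\pi(C_1)|$ edges, whence $z^* \in O(n \log \psi)$. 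You need this matching-based charging; the incidence-based one is false.
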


\begin{proof}
  We first analyse our space usage and then show how to maintain each matching.

\textbf{Upper bounding size.}
For $C_1$, denote by $z_1$ the sum of all $C_i$, over all edges in the maximal matching of $G[R \cup \pi(C_i)]$ with $R \gets \{ \gamma \in \pi(C_1) \mid C_2 \in \mathbb{P}^*(\gamma) \}$.
Lemma~\ref{lem:lower_envelope_L1} presents a data structure with size $O((|\pi(C_1)| + z_1) \log^3 |\pi(C_1)|)$.
Denote by $\mathcal{M}(\S)$ the set of all edges, across all maximal bichromatic matchings, for all pairs of storing cells $(C_1, C_2)$.
Let $\mathcal{M}(\S)$ contain $z^*$ elements.
It follows that all these data structures use at most $O( (n + z^*) \cdot \log^3 n)$ total space. 
We upper bound the number of edges in $z^*$ by charging each edge to one of their endpoints.  
Intuitively, we charge each matched edge to the squares of the smallest quadtree cell.  Every square $\sigma \in \S$ receives at most $O(\log \psi)$ charges and $z^*$ is upper bound by $n \log \psi$. 

More formally, we over-estimate the edges in $\mathcal{M}(\S)$. 
Fix for \emph{every} pair $(C_1, C_2)$ with $C_2 \in \mathcal{P}(\pi(C_1)))$ an arbitrary maximal bichromatic matching in the graph $G[\pi(C_1) \cup \pi(C_2)]$ (i.e., we ignore the fact that we match between sets $R \subseteq \pi(C_1)$  and $B \subseteq \pi(C_2)$, and fix some potentially larger matching in the bigger graph $G[\pi(C_1) \cup \pi(C_2)]$). 
Denote for $C_1$ by $\mathcal{M}_\prec(C_1)$ the set of all matchings between $C_1$ and $C_2$ where $|C_1| \prec |C_2|$ for $\prec~\in \{ <, =, > \}$. 
Any edge $e \in \mathcal{M}(\S)$ is in $\mathcal{M}_=(C_1) \cup \mathcal{M}_<(C_1)$ for some storing cell $C_1$. 
First, we upper bound $|\mathcal{M}_=(C_1)|$. There are $O(1)$ cells $C_2$ with $|C_1|=|C_2|$ and $C_1 \in \mathcal{P}(\pi(C_2))$ or vice versa. 
For every such $C_2$, there can be at most $O(|\pi(C_1)|)$ edges in a MBM in $G[ \pi(C_1) \cup \pi(C_2)]$. Thus, there are at most $O( |\pi(C_1)|)$ edges in $\mathcal{M}_=(C_1)$. 
Second, by Lemma~\ref{lemma:uphill_permeter}, there are at most $O(\log \psi)$ cells $C_2$ with $C_1 \in \mathcal{P}(\pi(C_2))$ and $|C_1| < |C_2|$. 
Again, every matching in   $G[ \pi(C_1) \cup \pi(C_2)]$ has at most $O(|\pi(C_1)|)$ edges, thus $\mathcal{M}_<(C_1)$ contains at most $O(|\pi(C_1)| \cdot \log \psi)$ edges. Now:
\[
z^* = | \mathcal{M}(\S)| \leq \sum_{\textnormal{storing cell } C_1} |\mathbb{M}_=(C_1)|  + |\mathbb{M}_<(C_1)| \leq \sum_{\textnormal{storing cell } C_1} |\pi(C_1)| \cdot \log \psi \leq n \log \psi
\]
It follows we use at most $O( (n + z^*) \cdot \log^3 n)  \subset O( n \log \psi \log^3 n))$ space.

    \textbf{Maintaining the MBM.}
  Suppose that we delete a square $\sigma$ from $\S$ (this is the more difficult case). 
    We can find its storing cell $C_\sigma$ in $O(\log n)$ time using standard quadtree navigation.
We obtain $\D^*( \pi(C_\sigma))$ with its $k \in O(\psi^2)$ conflict sets.
Recall that $\sigma$ appears in at most $l \in O(|\mathcal{P}(\sigma)|)$ conflict sets. 
By Lemma~\ref{lem:lower_envelope_L1}, we may remove $\sigma$ from the data structure  $\D^*(\pi(C_\sigma))$  in $O(l \log^2 n) \subseteq O(|\mathcal{P}(\sigma)| \cdot \log^2 n)$ amortized time. 
What remains is to update all the matchings. 
We recall that we maintain a matching between $(C_\sigma, C_2)$ in two cases:
either the cell $C_2 \in \mathcal{P}(\pi(C_\sigma))$ or $C_\sigma \in \mathcal{P}(\pi(C_2))$.
There are at most $O(|\mathcal{P}(\pi(C_\sigma))|)$ cells of the first case, and $O(\log \psi)$ cells of the second case. 
By Theorem~\ref{thm:square-maintenance}, we may obtain all such cells in $O(\log^5 n + |\mathcal{P}(\sigma)|)$ time.

\textbf{Processing a cell $C_2$.}
Fix a cell $C_2$ with a corresponding conflict set $R_2$ in $\D^*(\pi(C_\sigma))$.
We test if $\sigma$ was an endpoint of the matching $(\sigma, \rho)$ by searching over the conflict set $R_2$. 
If so, then we delete $\sigma$ from the conflict set $R_2$.
What remains is to try and rematch $\rho$.

Thus, we want to find a square in $R = \{ \gamma \in \pi(\sigma) \mid C_2 \in \mathcal{P}(\gamma) \}$, that is not already in the conflict set $R_2$ (i.e. not already in the matching between $G[R \cup B]$). 
Denote by $K$ the cell $C_\sigma$ scaled by a factor 5 around its center and by $\underline{K}$ the bottom facet of $K$. 
We claim that $\rho$ can be matched to a square in $R$ if and only if Query$(\rho, R_2, \underline{K})$ from Lemma~\ref{lem:lower_envelope_L1} is not empty. 

Indeed, for any $\gamma \in \pi(C_\sigma)$ that intersects $\rho$ and contains $\underline{K}$, must contain $K$. By definition, $C_2 \not \in \mathcal{P}(\gamma)$ and thus $\gamma \not \in R$. 
For any $\gamma \in \pi(C_\sigma)$ that intersects $\rho$ where $\gamma \in R_2$, by definition $\gamma \not \in R \backslash M_{RB}$. 
For any $\gamma \in \pi(C_\sigma)$ that does not intersect $\rho$, there is no edge between $\gamma$ and $\rho$ in $G[R \cup B]$.
It follows that with one query we may rematch $\rho$ in $O(\log^3 n)$ amortized time.

Since there are at most $O( |\mathcal{P}(\sigma)| + \log \psi)$ cells $C_2$ to consider, we can maintain every MBM in $O( |\mathcal{P}(\sigma)| \cdot \log^3 n + \log^5 n)$ time.
\end{proof}

\section{Dynamic connectivity in square intersection graphs}

Having formally introduced and analysed every component, we can now fully state what our data structure maintains. For an illustration, we refer back to Figure~\ref{fig:newpipeline}.
We store a data structure that uses at most $O(n \log^3 n \log \psi)$ space:
\begin{enumerate}[(1)]
    \item We store  $\S$ in a quadtree $T(\S)$.
\begin{itemize}
\item This quadtree contains  for each cell $\sigma \in \S$ the neighborhood $\mathcal{N}(\sigma)$. Additionally, we
\item maintain all $C \in \mathcal{C}(\S)$ with $O(| \mathcal{C}(\sigma)| \cdot \log^4 n + \log^6 n)$ amortized time (Thm~\ref{thm:square-maintenance}).
\item This quadtree requires $O(n \log \psi)$ space (Thm~\ref{thm:square-maintenance}).
\end{itemize}
\item  We augment our quadtree with a Marked-Ancestor Tree (MAT).
\begin{itemize}
    \item We mark each cell $C \in \mathcal{C}(\S)$ in the MAT (Thm~\ref{thm:square-maintenance}).
\end{itemize}
\item For any storing cell $C$, we define a conflict set $R_i$ for all cells $C_i \in \mathcal{P}(\pi(C))$.
We store $\pi(C)$ with the conflict sets in our square intersection data structure  $\D^*(\pi(C))$.
\begin{itemize}
    \item Let $z^* = \sum_C \sum_{i} |R_i|$, the total space required is $O( (n + z^*) \log^3 n)$ (Lem~\ref{lem:lower_envelope_L1}).
    \item By the proof of Theorem~\ref{thm:mbm_L1}, $z^* \in O(n\log \psi)$ so we use $O(n \log \psi \log^3 n)$ total space. 
    \end{itemize}
\item  For each storing cell $C_1$ and each $C_2 \in \mathcal{P}(\pi(C_1)) $, we store a Maximal Bichromatic Matching (MBM) in $G[R \cup B]$.
\begin{itemize}
    \item We set $R$ as the set of squares in $C_1$ that have $C_2$ in their perimeter ($R \gets  \{ \gamma \in \pi(C_1) \mid C_2 \in \mathcal{P}(\gamma) \}$ and $B \gets \pi(C_2)$. 
\item Updates in $\S$ require $O( |\mathcal{P}(\sigma)| \cdot \log^3 n + \log^5 n)$ amortized time (Thm~\ref{thm:mbm_L1}). 
\end{itemize}
\item Finally, for any pair $(C_1, C_2)$, if their MBM is not empty, we store an edge between them. 
\begin{itemize}
\item We maintain the resulting `proxy graph' in the HLT data structure \cite{holm2001poly}. 
\item Inserting or deleting a square $\sigma$ introduces at most $O(|\mathcal{P}(\sigma)| + \log \psi)$ new edges.
\end{itemize}
\end{enumerate}

\noindent
We finally show that this data structure implies the following:

\begin{restatable}{theorem}{Lone}  
\label{thm:L1}
Let $\S$ be a set of squares with adaptive aspect ratio $\psi$. 
We can store $\S$ in a dynamic data structure of size $O(n \log^3 n \log \psi)$ with $O( (|\mathcal{C}(\sigma)| + |\mathcal{P}(\sigma)|)\log^4 n + \log^6 n)$ amortized deterministic update time such that for any pair of squares $(\sigma, \rho)$ we can query for the connectivity between $\sigma$ and $\rho$ in $O(\log n / \log \log n)$ time. 
\end{restatable}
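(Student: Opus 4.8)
The plan is to verify that the assembled five-component structure described in items (1)--(5) above meets the three claimed bounds. The proof splits into a space accounting, an update-time accounting, and a correctness-plus-time argument for the query; the first two are bookkeeping over the per-component bounds already established, so the query correctness is the only conceptual step.

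For the \emph{space} bound I would sum the five contributions. The quadtree $T(\S)$ together with the marks of $\mathcal{C}(\S)$ in the MAT takes $O(n\log\psi)$ by Theorem~\ref{thm:quadtree_size}; the collection of structures $\D^*(\pi(C))$ dominates, using $O(n\log\psi\log^3 n)$ by the bound $z^*\in O(n\log\psi)$ from the proof of Theorem~\ref{thm:mbm_L1} together with Lemma~\ref{lem:lower_envelope_L1}; the HLT structure on the proxy graph has $O(n)$ vertices and $O(z^*)=O(n\log\psi)$ edges and is thus subsumed, giving a total of $O(n\log^3 n\log\psi)$. For the \emph{update} bound I would add the per-update costs: $O(|\mathcal{C}(\sigma)|\log^4 n+\log^6 n)$ for maintaining the quadtree, $\mathcal{C}(\S)$, and its MAT marks (Theorem~\ref{thm:square-maintenance}); $O(|\mathcal{P}(\sigma)|\log^3 n+\log^5 n)$ for all affected matchings (Theorem~\ref{thm:mbm_L1}); and, since an update flips at most $O(|\mathcal{P}(\sigma)|+\log\psi)$ proxy edges, each costing $O(\log^2 n)$ amortized in HLT~\cite{holm2001poly}, a further $O(|\mathcal{P}(\sigma)|\log^2 n+\log^3 n)$. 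Using $\log\psi\in O(\log n)$, everything collapses into $O((|\mathcal{C}(\sigma)|+|\mathcal{P}(\sigma)|)\log^4 n+\log^6 n)$ amortized deterministic time.

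For the \emph{query} on $(\sigma,\rho)$ I would first locate the storing cells $C_\sigma,C_\rho$ by point location in $O(\log n)$ time. Using the MAT~\cite{alstrup1998marked} I compute the highest marked ancestor $C_a$ of $C_\sigma$; since the marked cells are exactly $\mathcal{C}(\S)$ and $C_\sigma\in\mathcal{C}(\sigma)$ is itself marked, $C_a$ exists and is the largest cell above $C_\sigma$ that is a maximal cell contained in some square. With each marked cell I store a pointer to the storing cell $C^*:=C_{\sigma^*}$ of a largest square $\sigma^*$ with $C_a\in\mathcal{C}(\sigma^*)$; set $r(\sigma):=C^*$ and define $r(\rho):=R^*$ analogously. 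The query returns the answer of the connectivity query $(C^*,R^*)$ in the HLT proxy graph. Both the MAT step and the HLT query run in $O(\log n/\log\log n)$ time, matching the claimed query time. It then remains to prove the equivalence: $\sigma,\rho$ are connected in $G[\S]$ if and only if $C^*,R^*$ are connected in the proxy graph.

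The equivalence rests on two structural facts that I would isolate as lemmas. First, a \emph{clique} property: all squares stored in a common cell $C$ contain $C$, hence pairwise intersect, so $\pi(C)$ is connected and its squares share one proxy vertex. This makes the backward direction routine: a proxy path $C^*=D_0,\dots,D_t=R^*$ yields, per edge, a matched intersecting pair witnessing that $\pi(D_j)$ and $\pi(D_{j+1})$ are connected, and the cliques glue these together, so $\sigma^*$ and $\rho^*$ are connected; since $C_\sigma\subseteq C_a\subseteq\sigma^*$ forces $\sigma\cap\sigma^*\neq\emptyset$ (symmetrically for $\rho$), we obtain $\sigma\sim\rho$. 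Second, an \emph{edge-capture} dichotomy for an adjacency $\gamma\sim\delta$ with $|\delta|\le|\gamma|$: either $\delta$ meets $\partial\gamma$, in which case the factor-$5$ scaling in the definition of $\mathcal{P}$ guarantees $C_\delta\in\mathcal{P}(\gamma)$, so that $\gamma$ lies in the set $R$ of the matching for $(C_\gamma,C_\delta)$ and the nonempty $\MBM^*$ yields a proxy edge $C_\gamma\sim C_\delta$; or $\delta\subseteq\gamma$, in which case $C_\delta\subseteq\gamma$ places a marked ancestor of $C_\delta$ inside $\gamma$, so $r(\cdot)$ "jumps past" the containment. The main obstacle is the forward direction: turning this dichotomy into a proof that $r(\gamma)\sim r(\delta)$ for every edge of a connecting path, and hence $r(\sigma)\sim r(\rho)$. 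The delicate point is the interaction of the two cases --- showing that the representative $r(\cdot)$ is consistent, i.e. that collapsing nested containments through the MAT never disconnects a pair that the boundary-matchings would otherwise connect, and that the choice of $\sigma^*$ among several squares sharing $C_a$ is immaterial (they all contain $C_a$, hence pairwise intersect). This step mirrors the query-correctness proof of Kaplan~\etal~\cite{kaplan2022dynamic}, re-derived for the sparser sets $\mathcal{C}$ and $\mathcal{P}$; once it is in place, chaining the per-edge statement along the path and combining with the clique property and the backward direction completes the equivalence, and with the space and update accountings establishes the theorem.
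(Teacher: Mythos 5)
Your space and update accounting matches the paper's (the paper delegates these to the itemized component list preceding the theorem, exactly as you do), and your query mechanics are close to the paper's, but your treatment of query correctness has a genuine gap: the forward direction of the equivalence --- the one step you yourself identify as ``the main obstacle'' --- is never actually proved. You sketch a per-edge dichotomy (boundary intersection versus containment) and a representative map $r(\cdot)$, and then defer the chaining argument and the consistency of $r(\cdot)$ with the phrase ``once it is in place.'' That deferred step is precisely the hard content; note also that for the containment case $\delta \subseteq \gamma$ there is in general \emph{no} proxy edge $C_\gamma \sim C_\delta$ (the factor-$5$ scaling of $C_\delta$ need not reach $\partial\gamma$, so $C_\delta \notin \mathcal{P}(\gamma)$), so your per-edge statement $r(\gamma)\sim r(\delta)$ cannot mean adjacency and the induction along the path must route through intermediate marked ancestors --- a nontrivial argument you do not supply. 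The paper avoids re-deriving any of this: at query time $\psi$ is fixed, so it rescales the plane so that all diameters lie in $[\frac{1}{\psi},1]$, observes that $T(\S) \subset H(\S)$, and shows that for every nonempty $\MBM$ of Kaplan~\etal{} between $(C_1,C_2)$ both cells are storing cells, whence $C_2 \in \mathcal{P}(\pi(C_1))$ and the red sets coincide ($R' = R$); thus every proxy edge of~\cite{kaplan2022dynamic} is present in the new proxy graph (and every extra edge is witnessed by a genuine intersection, so it is sound), and Theorem~4.3 of~\cite{kaplan2022dynamic} applies verbatim. If you intend your from-scratch route, you must actually prove the forward direction for the sparser sets $\mathcal{C}$ and $\mathcal{P}$; as written, the proposal asserts rather than establishes it.

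A second, smaller error: you locate the storing cells $C_\sigma, C_\rho$ ``by point location in $O(\log n)$ time,'' which already exceeds the claimed $O(\log n / \log\log n)$ query bound. Since queries are between squares of $\S$, the paper instead assumes each square stores a pointer to its storing cell, retrieved in $O(1)$ time, so that only the marked-ancestor query and the HLT connectivity query, each $O(\log n/\log\log n)$, contribute. Your handling of ties at the marked ancestor (choosing the largest square $\sigma^*$ with $C_a \in \mathcal{C}(\sigma^*)$, and noting all markers of $C_a$ pairwise intersect since they contain $C_a$) is fine and agrees with the paper's remark that the choice of marker is arbitrary.
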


\begin{proof}
    Our pipeline functions identical to the pipeline of~\cite{kaplan2022dynamic}. 
    Given $\sigma$, we obtain a pointer to its storing cell $C_\sigma$ in $O(1)$ time. 
    We then query the marked-ancestor tree in $O(\log n / \log \log n)$ time to find the largest ancestor $C_\alpha$ of $C_\sigma$ that is marked. The cell $C_\alpha$ is marked by at least one squares $\gamma$ that contains $C_\alpha$ in its interior. We obtain a pointer to $\gamma$ and its storing cell $C^*$ in $O(1)$ time. We note that if there is also some squares $\gamma'$ that marked $C_\alpha$, we may arbitrarily get a pointer to either $\gamma$ or $\gamma'$. 
    Doing the same procedure for $\rho$ gives a cell $R^*$. 
    We test whether $C^*$ and $R^*$ are connected in the proxy graph $O(\log n / \log \log n)$ time. 
    We now claim that these two cells are connected in the proxy graph if and only if $(\rho, \sigma)$ are connected. 
    The key observation to prove this claim is that, if we were to rescale the plane, our graph contains the proxy graph maintained by  Kaplan~\etal~\cite{kaplan2022dynamic} as a subgraph.
    Indeed at the time of a query, $\psi$ is fixed.
    Thus, we may rescale the plane such that every square has a diameter in $[\frac{1}{\psi}, 1]$.
    Let $H(\S)$ be the quadtree of~\cite{kaplan2022dynamic}, then $T(\S) \subset H(\S)$. 
    Kaplan~\etal maintain for every pair $(C_1, C_2)$ with $C_2 \in \mathbb{P}^*(\pi(C_1))$ an Maximal Bichromatic Matching in the graph $G[R' \cup B']$ for $R' \gets  \{ \gamma \in \pi(C_1) \mid C_2 \in \mathbb{P}^*(\gamma) \}$ and $B' \gets \pi(C_2)$.
    For each nonempty MBM between a pair $(C_1, C_2)$, they store an edge in the proxy graph. 
    
    Note that if the MBM is nonempty, then both $C_1$ and $C_2$ are storing cells.
    It follows that $C_2 \in \mathcal{P}(\pi(C_1))$; and that $R' = R$.
    Thus, we store for each non-empty MBM a maximal bichromatic
    matching in the graph $G[R \cup B] = G[R' \cup B']$ as in~\cite{kaplan2022dynamic}.
    This implies that after rescaling, whenever there exists an edge in the proxy graph of~\cite{kaplan2022dynamic}, there exists an edge in our data structure. 
    Thus, we may immediately apply the proof of Theorem 4.3 in~\cite{kaplan2022dynamic} to conclude that $(\sigma, \rho)$ are connected if and only if $(C^*, R^*)$ are. 
\end{proof}

\bibliographystyle{plain}
\bibliography{refs}

\end{document}